\documentclass[12pt]{amsart}
\usepackage{amsmath,amsthm,amssymb}

\usepackage[utf8]{inputenc} 
\usepackage[T1]{fontenc}    
\usepackage{hyperref}       
\usepackage{url}           
\usepackage{booktabs}       
\usepackage{amsfonts}       
\usepackage{nicefrac}       
\usepackage{microtype}      
\usepackage{xcolor}

\usepackage{subcaption}
\usepackage[margin=1in]{geometry}
\usepackage{graphicx}
\usepackage{comment}
\usepackage{enumitem}
\setlist[itemize]{leftmargin=*}
\usepackage{float}

\theoremstyle{plain}

\newtheorem{theorem}{Theorem}[section]
\newtheorem{lemma}[theorem]{Lemma}

\newtheorem{proposition}[theorem]{Proposition}
\newtheorem{assumption}[theorem]{Assumption}

\theoremstyle{remark}

\numberwithin{equation}{section}

\def\spacingset#1{\renewcommand{\baselinestretch}%
{#1}\small\normalsize} \spacingset{1}
\spacingset{1.45}

\allowdisplaybreaks

\title[Learning Preferences from the Past]{A Statistical Framework \\ for Learning Preferences from the Past} 

\author[T. Sadhukhan]{Tamojit Sadhukhan}
\address{
    Theoretical Statistics and Mathematics Unit  \\
   Indian Statistical Institute \\
    203 B. T. Road, Kolkata 700108 \\
    West Bengal, India
}
\email{tamojit96sadhukhan@gmail.com}
\author[M. Banerjee]{Moulinath Banerjee}
\address{
    Department of Statistics  \\
    University of Michigan \\
    275, West Hall, 1085 South University, Ann Arbor \\
    MI 48109, United States }
\email{moulib@umich.edu}
\author[K. Maulik]{Krishanu Maulik}
\address{
    Theoretical Statistics and Mathematics Unit \\
    Indian Statistical Institute \\
    203 B. T. Road, Kolkata 700108 \\
    West Bengal, India 
}
\email{kmisical@gmail.com}
\author[P. Roy]{Parthanil Roy}
\address{
    Department of Mathematics \\
   Indian Institute of Technology Bombay \\
    Maharashtra 400076 \\
    India 
}
\email{parthanil.roy@gmail.com}
\thanks{The first author is partially supported by an IMU Breakout Graduate Fellowship.}

\begin{document}

\begin{abstract}
In many real-world settings such as online recommendation or consumer choice modeling, individuals make repeated choices from a fixed set of options. Accurately estimating their underlying preferences is essential for generating personalized future recommendations. Probabilistic models for understanding user choice behavior from past decisions can serve as a valuable addition to existing recommender systems and choice prediction methods. To this end, in this article, we introduce a novel statistical framework for predicting user preferences based on their past choices, under a natural monotonicity assumption: options that were chosen more frequently or more intensely in the past are more likely to be chosen again in the future.  Our approach builds on a parametric model proposed by \cite{MR3657899}, originally used to describe how ants in an ant colony select a path among many pre-existing paths. We propose a non-parametric generalization of this model, drawing inspiration from the generalized elephant random walk introduced by \cite{gerw}. We develop a method of maximum likelihood estimation of the user preference probabilities under the above-mentioned monotonicity constraint. We also derive theoretical guarantees for our estimator and demonstrate the effectiveness of our method through both simulated experiments and real-world datasets. 
\end{abstract}

\keywords{shape-constrained estimation, nonparametric maximum likelihood estimation, greatest convex minorant, Chernoff's distribution}

\subjclass[2020]{Primary: 62G05; Secondary: 62M05, 62M20, 60G25}

\maketitle

\section{Introduction}\label{sec1}

In many everyday scenarios, individuals are faced with multiple options and make choices based on their personal preferences. 
For instance, a shopper needs to decide which product to buy from a wide catalog in a retail shop or in an e-commerce platform. A viewer must choose the next movie, show, or song to watch on a streaming service. Users confronted with online ads may click on some but ignore others, while a customer ordering food using food delivery apps must pick which restaurant or dish best matches their tastes. In these scenarios, what matters is not only which option a user selects, but also how that choice is expressed. 
The intensity of a choice often provides a stronger quantification of preference than frequency alone. For example, in e-commerce, the amount a user spends on a product can reflect the strength of their commitment; in streaming platforms, the duration of viewing captures how engaged the user is with the content; in online advertising, the time spent interacting with an advertisement goes beyond a simple click to reveal genuine interest; and in food delivery apps, the size or value of an order can signal stronger loyalty than a casual, low-cost purchase. 

Across all these diverse settings, 
user behavior exhibits a natural reinforcement pattern: options that have been chosen more frequently or more intensely in the past are expected to be more likely to be chosen again in the future. Classical examples of such reinforcement dynamics can also be found in nature, for example in the foraging behavior of ants. Ants coordinate path selection through pheromone deposition: when an ant travels along a path, it leaves pheromones that increase the attractiveness of that path for subsequent ants. 
\cite{deneubourg1990self} introduced a parametric model that studied how ants in an ant colony select between two branches of a path leading to food. In this model, each ant deposits pheromone when it traverses a branch, and the probability that subsequent ants choose a branch increases with the quantity of pheromone already present in it. This local reinforcement mechanism produces a feedback loop, eventually leading the colony to favor one branch over the other, even if the two branches are initially identical. This biological example provides a natural analogy to human decision-making: 
just as pheromones bias future ant choices, past user choices can reinforce certain options, shaping future preferences in a predictable manner.

Accurately estimating user preferences is essential for entities that offer choice options, in order to deliver personalized and effective recommendations, as it directly influences user satisfaction and engagement. One approach to such preference estimation, building on the framework of \cite{deneubourg1990self}, was proposed in \cite{MR3657899}, where statistical procedures were developed for estimating the parameters of the underlying model. Specifically, \cite{MR3657899} studied a two-colored urn process as a mathematical analogue of the branch selection problem, and developed both maximum likelihood estimation and weighted least squares estimation methods. They established consistency and asymptotic normality of these estimators, providing a statistically rigorous foundation for analyzing such reinforcement dynamics. Furthermore, they applied their methods to experimental data on ant path selection, showing how model parameters could be reliably inferred from observed choices, thereby moving beyond heuristic curve-fitting methods of \cite{deneubourg1990self} that lack statistical guarantees. This line of work illustrates how reinforcement-driven behaviors can be both modeled and estimated quantitatively, offering valuable insights for analogous problem in human choice modeling.

Inspired by the work of \cite{MR3657899}, we introduce a novel framework for estimating the probabilities of a user selecting each of the given options,  based on their past choices, under a natural monotonicity assumption: the chance of an option being chosen increases with the frequency and intensity of its past selection, capturing the intuition that repeated intense choices signal stronger preferences. 
Our framework is probabilistic, interpretable, and grounded in reinforcement dynamics, offering both statistical validity and intuitive insights into user behavior. Moreover, while \cite{MR3657899} deals with a parametric setting, our approach is non-parametric, providing greater flexibility to capture diverse user behaviors without imposing strong functional form assumptions. This non-parametric approach is motivated by the generalized elephant random walk models of \cite{gerw}, with which it shares a natural connection that we elaborate on in Section~\ref{sec2}. Importantly, our model also incorporates the intensity of choices, acknowledging that not all user actions carry the same weight. Furthermore, we develop estimation procedures for this model and demonstrate their effectiveness through both theoretical analysis and empirical evaluation. 

Beyond its standalone interest, our framework can also serve as a useful 
component within modern recommendation systems. Recommendation systems have become the standard approach for predicting user preferences. Collaborative filtering 
such as neighborhood-based approach, matrix factorization, content-based filtering (for an overview of these methods, see, for example, \cite{Aggarwal2016} and the references therein) and more recently, deep learning methods (\cite{chen2022sequential}, \cite{covington2016deep}, \cite{zhang2019deep}) have achieved considerable success in practice. Probabilistic reinforcement-based models such as ours can potentially provide additional interpretable preference structures or informative priors for such systems. In particular, the ability to model calibrated choice probabilities may help enrich recommendation pipelines in settings where understanding user decision behavior is important. 

The remainder of the paper is organized as follows. In Section~\ref{sec2}, we introduce our model and describe the associated estimation methods, along with its connection to the generalized elephant random walks. Section~\ref{sec3} presents the main theoretical results and discusses their implications. Section~\ref{sec4} and Section~\ref{sec5} provide empirical evidence through simulation studies and real data applications, respectively. Proofs of the theoretical results are in Section~\ref{sec6}.

\section{Estimation of User Preference Probabilities \\ under Monotonicity Constraint}\label{sec2}

\subsection{Setup and The Model}\label{sec2.1} We observe the choice dynamics of a population of $N$ users over a fixed observation window. Each user makes a sequence of selections from a common set of $m = 2$ options, say $c_1$ and $c_0$, and these selections are recorded at discrete time points. For $j = 1, \ldots, N$, 
the ${j}$-th user makes ${T_j+1}$ choices during the observation window, where $T_j$ are independent and identically distributed, with common distribution $T$ being positive integer valued.  For each user $j = 1, \ldots, N$, we observe $\left(U^{(j)}_{t+1}, W^{(j)}_{t+1}\right)$ at time $t+1$, $t = 0, \ldots, T_j$, where 
\[
   U^{(j)}_{t+1} = \begin{cases} 1,  \quad j^{th} \text{ user chooses } c_1 \text{ at time } t+1, \\ 0, \quad j^{th} \text{ user chooses } c_0 \text{ at time } t+1,\end{cases}
\]
and $W^{(j)}_{t+1}$ is the intensity of the choice made by the $j$-th user at time $t+1$. We assume that the users are identical and independent, i.e., there exists random variables $U_{1}, U_2, \ldots $ and $W_{1}, W_2, \ldots $ such that for $j = 1, \ldots, N$, we have
\[\left(T_{j}, U^{(j)}_{1}, W^{(j)}_{1}, \ldots, U^{(j)}_{T_j+1}, W^{(j)}_{T_j+1}\right) \stackrel{iid}{\sim} \left(T, U_{1}, W_1, \ldots, U_{T+1}, W_{T+1}\right).\] For $j = 1, \ldots, N$, the relative intensity with which $c_1$ is chosen by the $j$-th user till time $t$ is denoted by $R^{(j)}_{t}$, given by,
\[
     R^{(j)}_{t} := \frac{\sum_{i=1}^{t} U^{(j)}_iW^{(j)}_i}{\sum_{i=1}^{t} W^{(j)}_i}, \quad 1 \leq t \leq T_j.
\]
Thus, for $j = 1, \ldots, N$, $\left(T_{j}, R^{(j)}_{1}, \ldots, R^{(j)}_{T_j}\right) \stackrel{iid}{\sim} \left(T, R_{1}, \ldots, R_{T}\right)$, where
\[
   R_{t} := \frac{\sum_{i=1}^{t} U_iW_i}{\sum_{i=1}^{t} W_i}, \quad 1 \leq t \leq T. 
\]
For $j = 1, \ldots, N$, our model characterizes the conditional probability of choosing $c_1$  by the $j$-th user at time $t + 1$, conditioned on $R^{(j)}_{t}$, in the following way: 
\[{\mathbb{P}\left(U^{(j)}_{t+1} = 1 \hspace{.1cm}\big|\hspace{.1cm} R^{(j)}_{t}\right)} = h(R^{(j)}_t), \quad t \geq 1, \quad \mathbb{P}\left(U^{(j)}_{1} = 1 \right) = q,\]
where $h:[0,1] \to [0,1]$ is a non-decreasing function, which we call the preference function, and $q \in (0,1)$. Our interest lies in learning the preference function $h$.

Our modeling approach is inspired in part by the (one-dimensional) generalized elephant random walk, introduced in \cite{gerw}. Elephant random walk is a one-dimensional discrete-time random walk which moves along the integer line $\mathbb{Z}$,
one step ($\pm1$) at a time. It is different than the simple symmetric random walk in the sense that it uses the memory
of the past steps to determine its future steps. In particular, the probability of taking a particular type of 
step ($+1$ or $-1$) at a time-point is a linear function of the proportion of steps of that type till that time point. The generalized elephant random walk replaces this linear structure of the classical elephant random walk with general functional forms (see, for example, Section 2 of \cite{gerw}). Analogously, our framework employs a flexible non-parametric structure to model reinforcement in user choices. When the choice intensities are equal, the sequence of choices in our model has a natural one-to-one correspondence with the steps of the generalized elephant random walk, thereby establishing a direct connection with the walk.
In particular, if, in our model, we take the preference function $h$ to be of the form \begin{align}\label{eq:gerwf}h(x) = pf(x) + (1-p)(1-f(x))\end{align} for some non-decreasing $f : [0,1] \to [0,1]$ and $p \in (1/2,1)$, and take $W_t \equiv 1$ for all $t \geq 1$, then we have the following one-to-one correspondence: \[
\left\{(2U_{t}-1) , t \geq 1\right\} \stackrel{d}{=} \left(X_t, t \geq 1\right),\] where $X_t$ is the $t$-th step of 
the generalized elephant random walk with parameters $f$ and $p$ (see, for example, Section 2.1 of \cite{gerw} for more details).

\subsection{Estimation Methods}

Our goal is to learn the function \(h\) under the natural constraint that \(h\) is non-decreasing. In particular, we seek both a {point estimate} of \(h(r_0)\) for any \(r_0\in(0,1)\) and a {confidence set} for \(h(r_0)\), which provides a 
quantification of estimation uncertainty. Confidence sets offer a principled way to assess the reliability of the point estimates of the preference probabilities, making it useful in applications where conservative decisions are desirable.

The problem of estimating monotone functions has a long history in statistics, beginning with the interval censoring or {current status model} (\cite{MR73895}), where the nonparametric maximum likelihood estimator of the underlying monotone distribution function (of, for example, the (random) time to infection) was introduced. 
Subsequent work by \cite{groeneboom1992information} established the nonstandard asymptotic theory of such estimators, highlighting $n^{1/3}$ convergence rates and limit distributions linked to the distribution of the location of the minimum of two-sided Brownian motion with quadratic drift. 
Parallel developments in {monotone density estimation}, where the maximum likelihood estimate of the unknown non-increasing density is the Grenander estimator (\cite{grenander1956theory}, \cite{GroeneboomP}), revealed similar nonstandard asymptotics. 
These ideas were later extended to monotone regression problems (\cite{brunk1969estimation}), panel count data (\cite{wellner2000two}), and mixed case interval censoring models (\cite{schick2000consistency}, \cite{song2004estimation}). A significant advance was made by \cite{banerjee2001likelihood}
who introduced the likelihood ratio test and the corresponding confidence intervals as a general tool for inference under monotonicity constraints and showed that the likelihood ratio statistic admits a pivotal asymptotic distribution, thereby enabling confidence interval construction through the standard
inversion technique. The computations of both the non-parametric maximum likelihood estimates and the likelihood ratio statistics in these contexts are generally based on the theory of generalized
isotonic regression and can be efficiently carried out using suitably adapted versions of the Pool Adjacent Violators Algorithm (\cite{robertson1988order}).
Collectively, this body of work established both the computational tractability of estimators of monotone functions and their distinctive asymptotic behavior, laying the groundwork for more recent developments in likelihood-based inference under monotonicity constraints.

We employ the nonparametric maximum likelihood estimation method based on the observations $\left(U^{(j)}_{t+1}, W^{(j)}_{t+1}\right)$, $t=0, \ldots, {T_j}$, $j = 1, \ldots, N$. 
The likelihood 
$L_N(h)$ is given by
\begin{align*}
    L_N(h) &= 
    q^N\prod_{j=1}^N \prod_{t_j = 1}^{T_{j}}\left\{h\left(R^{(j)}_{t_j}\right)\right\}^{U^{(j)}_{t_j+1}}\left\{1-h\left(R^{(j)}_{t_j}\right)\right\}^{1-U^{(j)}_{t_j+1}},
\end{align*}
and the log-likelihood $l_N(h) = \log L_N(h)$, up to a constant independent of $h$, is given by
\[
   l_N(h) = \sum_{j=1}^N\sum_{t=1}^{T_j}\left[U^{(j)}_{t+1}\psi\left(R^{(j)}_{t}\right) - \log\left(1+e^{\psi\left(R^{(j)}_{t}\right)}\right)\right], \quad
\text{where} \quad \psi(x) = \log\left(\frac{h(x)}{1-h(x)}\right)
.\]
Let $R_{(1)} < \ldots < R_{(S)}$ be the ordered enumeration of the distinct elements of $R^{(j)}_{t}$, $t=1, \ldots, {T_j}$, $j = 1, \ldots, N$. For $1 \leq s \leq S$, define
\[
   {F}_s = \sum_{j=1}^N\sum_{t=1}^{T_j}\mathbf{1}\left\{R^{(j)}_{t} = R_{(s)}\right\}, 
   \quad \overline{U}_s = \frac{1}{{F}_s}\sum_{j=1}^N\sum_{t=1}^{T_j}U^{(j)}_{t+1}\mathbf{1}\left\{R^{(j)}_{t} = R_{(s)}\right\}.
\]
The log-likelihood can alternatively be written as
\begin{align}\label{eq:ll}
  l_N(h) = \sum_{s=1}^S \left[\psi\left(R_{(s)}\right){F}_s\overline{U}_s - {F}_s\log\left(1+e^{\psi\left(R_{(s)}\right)}\right)\right].
\end{align}
Our nonparametric maximum likelihood estimator $\hat{h}_N$ of $h$ is defined as the (unique) non-decreasing right-continuous step-function, with possible jumps only at the $R_{(s)}$'s, such that the
the log-likelihood in \eqref{eq:ll} is maximized. Since only the values \(h\left(R_{(s)}\right), 1 \leq s\leq S\) are identifiable, the particular choice of $\hat{h}_N$, among equivalent representations, is arbitrary and does not affect the asymptotic results. 
By  
the Kuhn-Tucker Theorem, 
\(\hat{h}_N\left(R_{(1)}\right) \leq \ldots \leq \hat{h}_N\left(R_{(s)}\right)\) can be expressed as the solution of the following convex optimization problem:  
\begin{align}\label{eq:opt}
\min_{0 \leq h_1 \leq \ldots \leq h_S \leq 1}\sum_{s=1}^S {F}_s(\overline{U}_s - h_s)^2.
\end{align}
To introduce the solution of \eqref{eq:opt}, we need the following well-known notation. For a sequence of points \(\{(r_0,u_0) = (0,0),(r_1,u_1),\ldots,(r_l, u_l)\}\) with \(r_0 < r_1 < \cdots < r_l\), define the left-continuous piecewise-constant function \(g(r)\) such that \(g(r_i) = u_i\), $i = 0, \ldots, l$, with \(g(r)\) being constant on each \((r_{i-1},r_i)\) and denote by $\operatorname{slogcm}\{r_i,u_i\}_{i=0}^l$ 
the vector of slopes (left-derivatives) of the greatest convex minorant of \(g(r)\), 
evaluated at the points \((r_1,\ldots,r_l)\). Then it is known that (see, for example, \cite{robertson1988order}), the solution of \eqref{eq:opt} is given by
\begin{align}\label{eq:hath}
    \left\{\hat{h}_N\left(R_{(s)}\right)\right\}_{s = 1}^{S} = \operatorname{slogcm}\left\{\sum_{a=1}^s{F}_a, \sum_{a=1}^s{F}_a\overline{U}_a\right\}_{s = 0}^{S}.
\end{align}
Here we are using the convention that summation over an empty set is $0$. Thus the point estimate $\hat{h}_N(r_0)$ of $h(r_0)$, for any $r_0 \in (0,1)$, is given by the following: get $s_0$ with ${R_{(s_0)} \leq r_0 < R_{(s_0+1)}}$ and define $\hat{h}_N\left(r_0\right) := \hat{h}_N\left(R_{(s_0)}\right)$.

The confidence set for \(h(r_0)\) is obtained by inverting the likelihood-ratio tests for testing the null hypotheses of the form ${H_0 : h(r_0) = h_0}$, $0 \leq h_0 \leq 1$. Under ${H_0 : h(r_0) = h_0}$, 
the estimator $\hat{h}^{(0)}_N$ of $h$ is defined as the (unique) non-decreasing right-continuous step-function, with possible jumps only at the $R_{(s)}$'s and $r_0$, such that the
the log-likelihood in \eqref{eq:ll} is maximized with the additional condition that $h(r_0) = h_0$. 
As before, 
\(\hat{h}^{(0)}_N\left(R_{(1)}\right) \leq \ldots \leq \hat{h}^{(0)}_N\left(R_{(s)}\right)\) can be expressed as the solution of the following convex optimization problem: 
\begin{align}\label{eq:opt2}
\min_{0 \leq h_1 \leq \ldots \leq h_{s_0} \leq h_0 \leq h_{s_0+1} \leq \ldots \leq h_S \leq 1}\sum_{s=1}^S {F}_s(\overline{U}_s - h_s)^2,
\end{align}
where $s_0$ is such that ${R_{(s_0)} \leq r_0 < R_{(s_0+1)}}$. 
The solution of \eqref{eq:opt2} is given by
\begin{align}\label{eq:hath01}
    \left\{\hat{h}^{(0)}_N\left(R_{(s)}\right)\right\}_{s = 1}^{s_0} = h_0 \wedge \operatorname{slogcm}\left\{\sum_{a=1}^s{F}_a, \sum_{a=1}^s{F}_a\overline{U}_a\right\}_{s = 0}^{s_0},
\end{align}
while
\begin{align}\label{eq:hath02}
    \left\{\hat{h}^{(0)}_N\left(R_{(s)}\right)\right\}_{s = s_0+1}^{S} = h_0 \vee \operatorname{slogcm}\left\{\sum_{a=s_0+1}^s{F}_a, \sum_{a=s_0+1}^s{F}_a\overline{U}_a\right\}_{s = s_0}^{S}.
\end{align}
Here the minimum and maximum are taken component-wise. The (twice) likelihood ratio statistic for testing ${H_0 : h(r_0) = h_0}$ is then given by
\begin{align*}
    2\log \lambda_N(h_0) = 2\left(l_N\left(\hat{h}_N\right) - l_N\left(\hat{h}^{(0)}_N\right)\right).
\end{align*}
The reason behind not using the asymptotic distribution of $\hat{h}_N(r_0)$ to obtain a Wald-type confidence set for $h(r_0)$ and the usefulness of the likelihood ratio statistic based confidence set is explained in Section~\ref{sec3}.

\section{Theoretical Results}\label{sec3}
Before describing the theoretical results for our estimation methods, we first state some regularity assumptions that are standard in the monotone function estimation literature. 
\begin{assumption}\label{a1}
   The positive integer valued random variable $T$ has bounded support. In other words, there exists a $T_0 \in \mathbb{N}$ such that $1 \leq T \leq T_0$ almost surely.
\end{assumption}
\begin{assumption}\label{a12}
   The process $(U_t, W_t)_{t\geq 1}$ is independent of $T$.
\end{assumption}

Our main results characterize the local behavior of the estimators at a fixed point $r_0 \in (0,1)$. The following assumptions ensure regularity of the model in a neighborhood $\mathcal{N}(r_0)$ of $r_0$.

\begin{assumption}\label{a2}
The function $\mathcal{F}(r) = \mathbb{E}\left( \sum_{t=1}^T \mathbf{1}\left\{R_{t} \leq r\right\} \right)$ is strictly increasing in $\mathcal{N}(r_0)$.
\end{assumption}
\begin{assumption}\label{a3}
    The two dimensional marginal distribution functions of the process $(R_t)_{t\geq1}$ \[(r_1,r_2) \mapsto \mathbb{P}\left(R_t \leq r_1, R_s \leq r_2\right), \quad (r_1,r_2) \in [0,1]^2, \quad s,t \geq 1,\]  is twice continuously differentiable in $\mathcal{N}(r_0) \times \mathcal{N}(r_0)$. In particular, the distribution function of $R_t$ is twice continuously differentiable in $\mathcal{N}(r_0)$ with the derivative $f_t$ satisfying $f_t(r_0) \neq 0$. 
\end{assumption}
From Assumptions~\ref{a1} - \ref{a3}, it follows that $\mathcal{F}'(r_0)$ exists and is non-zero.
\begin{assumption}\label{a4}
    The function $h$ is continuously differentiable in $\mathcal{N}(r_0)$ with $h'(r_0) \neq 0$.
\end{assumption}
Our first result is about the asymptotic distribution of our estimator at $r_0$.
\begin{theorem}\label{thm1}
Under Assumptions~\ref{a1} - \ref{a4}, we have
    \[N^{1 / 3}\left(\hat{h}_{N}\left(r_{0}\right)-h\left(r_{0}\right)\right) \stackrel{d}{\rightarrow} \left(\frac{4h\left(r_{0}\right)\left(1-h\left(r_{0}\right)\right) h^{\prime}\left(r_{0}\right)}{\mathcal{F}^{\prime}\left(r_{0}\right)}\right)^{1 / 3} \underset{x \in \mathbb{R}}{\arg \max }\left(\mathbb{B}(x)-x^{2}\right),\]
    where $\mathbb{B}$ is a two-sided Brownian motion satisfying $\mathbb{B}(0) = 0$.
\end{theorem}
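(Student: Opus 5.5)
We follow the standard switching-relation route for isotonic (Grenander-type) estimators, adapted to clustered data: observations within a user are dependent, but users are i.i.d. We define the empirical cumulative processes
\[
\mathbb{G}_N(r) = \frac1N\sum_{j=1}^N\sum_{t=1}^{T_j}\mathbf{1}\{R^{(j)}_t\le r\},\qquad
\mathbb{V}_N(r) = \frac1N\sum_{j=1}^N\sum_{t=1}^{T_j}U^{(j)}_{t+1}\mathbf{1}\{R^{(j)}_t\le r\}.
\]
Their population versions are $\mathcal{F}(r)$ and $\mathcal{V}(r)=\int_0^r h\,d\mathcal{F}$. For the latter we use $\mathbb{E}[U_{t+1}\mid R_t]=h(R_t)$ together with Assumption~\ref{a12}, which lets us condition on $T$ first. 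By \eqref{eq:hath}, $\hat h_N$ is the left slope of the greatest convex minorant of the cumulative sum diagram $\{(\mathbb{G}_N(R_{(s)}),\mathbb{V}_N(R_{(s)}))\}$. The switching relation then gives
\[
\hat h_N(r_0)>a \iff \hat r_N(a) < r_0 \quad (\text{up to ties}),
\]
where $\hat r_N(a)=\arg\min_r\{\mathbb{V}_N(r)-a\,\mathbb{G}_N(r)\}$ (the largest minimizer, say). It therefore suffices to find the limit of $N^{1/3}(\hat r_N(a_N)-r_0)$ with $a_N=h(r_0)+xN^{-1/3}$.

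\textbf{Step 1 (localization).} Write $\mathbb{V}_N-a\mathbb{G}_N=\frac1N\sum_j m_{r,a}(\text{user }j)$. Here
\[
m_{r,a}=\sum_{t\le T}(U_{t+1}-a)\mathbf{1}\{R_t\le r\}
\]
is an i.i.d.\ sum over users, and each $m_{r,a}$ is bounded by $T_0$ (Assumption~\ref{a1}). The class $\{m_{r,a}\}$ is a finite sum of VC-type indicator classes multiplied by bounded variables, hence Donsker with a bounded envelope. The population criterion is $\int_0^r (h-a)\,d\mathcal{F}$. By Assumptions~\ref{a2} and~\ref{a4} it has a unique minimizer near $r_0$ and grows quadratically: its second derivative at $r_0$ is $h'(r_0)\mathcal{F}'(r_0)>0$. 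Consistency follows by an argmax argument. The rate $N^{-1/3}$ follows from the rate theorem of van der Vaart--Wellner (Thm 3.2.5). This needs the modulus bound $\mathbb{E}\sup_{|r-r_0|\le\delta}|\mathbb{G}^o_N(m_{r}-m_{r_0})|\lesssim\sqrt\delta$, which holds because the envelope of the increment class is $T_0\sum_t\mathbf{1}\{R_t\in(r_0-\delta,r_0+\delta]\}$. Its second moment is $O(\delta)$, by Assumption~\ref{a3} (bounded densities $f_t$, with $t\le T_0$).

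\textbf{Step 2 (local process).} Set $r=r_0+N^{-1/3}u$ and rescale the centered criterion by $N^{2/3}$. The drift becomes $\tfrac12 h'(r_0)\mathcal{F}'(r_0)u^2 - x\mathcal{F}'(r_0)u$. The stochastic part is $N^{1/6}\mathbb{G}_N$ applied to the increments $\sum_t (U_{t+1}-h(R_t))\mathbf{1}\{R_t\in(r_0,r]\}$, plus negligible terms. We verify the Lindeberg-type conditions of a functional CLT for the local empirical process (Kim--Pollard / vdV--W Thm 2.11.23). The key computation, and what we expect to be the main obstacle, is the covariance. For $u\le v$ we need
\[
N^{1/3}\,\mathbb{E}\Big[\sum_{t}\xi_t\mathbf{1}\{R_t\in I_u\}\sum_{s}\xi_s\mathbf{1}\{R_s\in I_v\}\Big]\to h(r_0)(1-h(r_0))\mathcal{F}'(r_0)\,u,
\]
where $\xi_t=U_{t+1}-h(R_t)$ and $I_u=(r_0,r_0+N^{-1/3}u]$. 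Cross terms $s<t$ vanish, since $\xi_t$ is a martingale difference with respect to the past: this requires interpreting the model as $\mathbb{P}(U_{t+1}=1\mid\text{past})=h(R_t)$, i.e.\ the Markov-type reading of the model. Without that, we would instead use Assumption~\ref{a3}: the joint distribution of $(R_t,R_s)$ is $C^2$, so the probability that both fall in an $N^{-1/3}$-window is $O(N^{-2/3})$ and those terms are negligible. The diagonal terms give $\mathbb{E}[h(1-h)(R_t)\mathbf{1}\{R_t\in I_u\}]$, summed over $t$ and over the distribution of $T$, which is $\approx h(r_0)(1-h(r_0))\mathcal{F}'(r_0)N^{-1/3}u$. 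The limit process is therefore $\sqrt{h(r_0)(1-h(r_0))\mathcal{F}'(r_0)}\,\mathbb{B}(u)+\tfrac12 h'(r_0)\mathcal{F}'(r_0)u^2-x\mathcal{F}'(r_0)u$.

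\textbf{Step 3 (conclusion).} By the argmin continuous mapping theorem (tightness from Step 1, and a.s.\ uniqueness of the Brownian argmin), $\mathbb{P}(N^{1/3}(\hat h_N(r_0)-h(r_0))>x)$ converges to the probability that the argmin of the limit process is $<0$. Completing the square, this argmin equals $x/h'(r_0)+\text{argmin}\{\sigma\mathbb{B}(u)+\tfrac{c}{2}u^2\}$ in distribution, with $\sigma^2=h(1-h)\mathcal{F}'$ and $c=h'\mathcal{F}'$. Brownian scaling, together with the symmetry of $\arg\max(\mathbb{B}(x)-x^2)$, then identifies the limit as $(4\sigma^2 c/ \mathcal{F}'^{3})^{1/3}\cdot\arg\max(\mathbb{B}(x)-x^2)$. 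This equals the stated constant $\big(4h(r_0)(1-h(r_0))h'(r_0)/\mathcal{F}'(r_0)\big)^{1/3}$.
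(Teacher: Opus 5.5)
Your proposal is correct and follows essentially the same route as the paper: the switching relation to the argmin of the cumulative criterion at level $a=h(r_0)+N^{-1/3}x$, $N^{-1/3}$-localization via van der Vaart--Wellner (Theorems 2.14.2 and 3.2.5), the local functional CLT via their Theorem 2.11.23 with covariance limit $(\delta_1\wedge\delta_2)h(r_0)(1-h(r_0))\mathcal{F}'(r_0)$, the argmin continuous mapping theorem, and Brownian scaling. The only minor difference is how the within-user cross terms are handled. The paper uses only the joint-density bound from Assumption~\ref{a3}, which is your fallback and gives an $O(N^{-1/3})$ contribution; your primary martingale-difference argument would instead require the stronger reading $\mathbb{P}(U_{t+1}=1\mid\text{past})=h(R_t)$ of the model.
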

The distribution of the random variable \(\mathbb{Z} := {\arg \max }_{x \in \mathbb{R}}\left(\mathbb{B}(x)-x^{2}\right)\) is the Chernoff's distribution (see, for example, Chapter 3 of \cite{MR3445293}, for an introduction to Chernoff's distribution). 

From Theorem~\ref{thm1}, we can construct an Wald-type approximate \((1-\alpha)\)-level confidence interval for 
\(h(r_{0})\) as
\[
\left( \hat{h}_{N}(r_{0}) - N^{-1/3} C_{h}q_{\mathbb{Z},1-\alpha/2},\;
       \hat{h}_{N}(r_{0}) + N^{-1/3} C_{h}q_{\mathbb{Z},1-\alpha/2} \right),
\]
where \(q_{\mathbb{Z},1-\alpha/2}\) denotes the \((1-\alpha/2)\)-quantile of \(\mathbb{Z}\) 
and
\[
C_{h} = 
\left( \frac{4\hat{h}_{N}(r_{0})(1-\hat{h}_{N}(r_{0}))\hat{h}'(r_{0})}{\widehat{\mathcal{F}}'(t_{0})} \right)^{1/3},
\]
with \(\hat{h}'\) and \(\widehat{\mathcal{F}}'\) denoting estimators of \(h'\) and \(\mathcal{F}'\), respectively. 
In practice, estimating \(\mathcal{F}'\) requires first estimating the probability mass function of \(T\) 
and then that of the finite dimensional distributions of the process \((R_1, R_2, \ldots)\). 
Furthermore, estimating \(h'\) is also an additional challenge. 
This is a common issue in monotone function estimation (see, for example, \cite{banerjee2005confidence} 
for a discussion), which can be circumvented by employing the likelihood ratio statistic based confidence set.

In order to derive the asymptotic distribution of the likelihood ratio statistic, we first need to obtain the asymptotic behavior of the processes $\Omega_N$ and $\Omega^{(0)}_N$, defined by, for fix $0 < r_0 < 1$, 
\begin{align}\label{eq:omega}
    \Omega_N(x) &= N^{1/3}\left(\hat{h}_N\left(r_0 + N^{-1/3}x\right) - h(r_0)\right), \quad x \in \mathbb{R}, \nonumber \\ 
    \Omega^{(0)}_N(x) &= N^{1/3}\left(\hat{h}^{(0)}_N\left(r_0 + N^{-1/3}x\right) - h(r_0)\right), \quad x \in \mathbb{R}.
\end{align}
To characterize the asymptotic distributions of $\Omega_N$ and $\Omega^{(0)}_N$, we introduce some stochastic processes (and their associated functionals).
For $\mathbb{B}$ a two-sided standard Brownian motion satisfying $\mathbb{B}(0) = 0$ and fixed constants $a,b > 0$, define the process $\widetilde{\mathbb{B}}_{a, b}$ as 
\begin{align}\label{eq:bm}\widetilde{\mathbb{B}}_{a,b}(x) = a \mathbb{B}(x) + b x^{2}, \quad x \in \mathbb{R}.\end{align}
Denote by \(\mathbb{G}_{a,b}\) the greatest convex minorant of the process 
\(\widetilde{\mathbb{B}}_{a,b}\). The right derivative of \(\mathbb{G}_{a,b}\), written as \(g_{a,b}\), is a piecewise constant, non-decreasing function, and on every compact interval 
it has only finitely many jumps. The left-sided and right-sided restrictions of these processes are defined as follows. 
Let \(\mathbb{G}^{-}_{a,b}\) (respectively, \(\mathbb{G}^{+}_{a,b}\)) denote the greatest convex minorant of \(\widetilde{\mathbb{B}}_{a,b}\) restricted to the set \(x \leq 0\) (respectively, \(x > 0\)), and let \({g}^{-}_{a,b}\) (respectively, \(g^{+}_{a,b}\) ) be its right-derivative process.  
Finally, define the combined process 
\[
g_{a,b}^{(0)}(x) = 
\begin{cases}
g^{-}_{a,b}(x) \wedge 0, & x \leq 0, \\
g^{+}_{a,b}(x) \vee 0, & x > 0.
\end{cases}
\]
The process \(g^{(0)}_{a,b}\) shares the same properties as 
\(g_{a,b}\): it is a piecewise constant, non-decreasing function with only finitely 
many jumps on any compact interval. 
on a finite interval containing zero. 
For detailed accounts of these processes, we refer to \cite{banerjee2001likelihood}. 

Define $\mathcal{L}_2$ (respectively, $\mathcal{L}_{\infty}$) to be the space of real-valued functions on $\mathbb{R}$ which are bounded on every compact set, equipped with the topology of $L_2$-convergence with respect to Lebesgue measure on compact sets (respectively, uniform convergence on compact sets). The following result establishes the asymptotic behavior of $\Omega_N$ and $\Omega^{(0)}_N$.

\begin{theorem}\label{thm2} Under Assumptions~\ref{a1} - \ref{a4}, we have
\begin{align*}
    \left(\Omega_N, \Omega^{(0)}_N\right) {\to} \left(g_{\alpha,\beta}, g^{(0)}_{\alpha,\beta}\right), \quad \text{ where } \quad 
    \alpha^2 = \frac{h\left(r_{0}\right)\left(1-h\left(r_{0}\right)\right)}{\mathcal{F}^\prime\left(r_{0}\right)}, \quad \beta = \frac{1}{2}h'(r_0),
\end{align*}
where the convergence is 
in the space $\mathcal{L}_2 \times \mathcal{L}_2$.
\end{theorem}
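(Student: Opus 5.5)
The plan is to follow the classical route of \cite{banerjee2001likelihood} for monotone response models, treating the data as $N$ i.i.d.\ clusters $(T_j, R^{(j)}_1,U^{(j)}_2,\ldots,R^{(j)}_{T_j},U^{(j)}_{T_j+1})$ rather than as independent pairs. Define the empirical processes
\[
\mathbb{F}_N(r)=\frac1N\sum_{j=1}^N\sum_{t=1}^{T_j}\mathbf 1\{R^{(j)}_t\le r\},\qquad \mathbb{V}_N(r)=\frac1N\sum_{j=1}^N\sum_{t=1}^{T_j}U^{(j)}_{t+1}\mathbf 1\{R^{(j)}_t\le r\}.
\]
By \eqref{eq:hath}, $\hat h_N$ is the left slope of the greatest convex minorant of the cumulative sum diagram $\{(\mathbb{F}_N(r),\mathbb{V}_N(r))\}$. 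Similarly, by \eqref{eq:hath01}--\eqref{eq:hath02}, $\hat h^{(0)}_N$ is obtained from the one-sided minorants on $\{r\le r_0\}$ and $\{r>r_0\}$, truncated at $h_0=h(r_0)$. Both are continuous functionals of the localized processes. For $x$ in a compact set, I will take
\[
\mathbb{X}_N(x)=N^{1/3}\,\mathcal{F}'(r_0)^{-1}\cdot N^{1/3}\Big[\big(\mathbb{V}_N-h(r_0)\mathbb{F}_N\big)(r_0+N^{-1/3}x)-\big(\mathbb{V}_N-h(r_0)\mathbb{F}_N\big)(r_0)\Big]
\]
as a function of the rescaled abscissa $N^{1/3}\mathcal{F}'(r_0)^{-1}\cdot N^{1/3}\big(\mathbb{F}_N(r_0+N^{-1/3}x)-\mathbb{F}_N(r_0)\big)\to x$. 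Then $\Omega_N$ and $\Omega^{(0)}_N$ are the (unconstrained, respectively constrained at $0$) slopes of the GCM of $\mathbb{X}_N$.

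The first step is the weak limit $\mathbb{X}_N\Rightarrow a\mathbb{B}(x)+bx^2$ in $\mathcal{L}_\infty$. Decompose $\mathbb{V}_N-h(r_0)\mathbb{F}_N$ into a martingale-type part $\frac1N\sum_j\sum_t(U^{(j)}_{t+1}-h(R^{(j)}_t))\mathbf 1\{\cdot\}$ and a drift part $\frac1N\sum_j\sum_t(h(R^{(j)}_t)-h(r_0))\mathbf 1\{\cdot\}$. For the drift, Assumption~\ref{a4} and a Taylor expansion give expectation $\mathcal{F}'(r_0)h'(r_0)x^2N^{-2/3}/2$, so after normalization the drift tends to $\beta x^2$ with $\beta=h'(r_0)/2$. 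Its fluctuation is $o_p(1)$ by a variance bound using Assumption~\ref{a1}. For the noise term, the summands are i.i.d.\ across $j$. The key computation is the covariance. Within a user, the cross terms $t\ne s$ vanish, because $U_{t+1}-h(R_t)$ is a martingale difference with respect to the user's own history: conditioning on the past up to the later time kills them. Here I need the model's conditional probability to hold given the whole past, or at least given $R_t$ and earlier variables, and I will read the model that way. The diagonal terms give variance $h(r_0)(1-h(r_0))\mathcal{F}'(r_0)|x|N^{-1/3}$ to leading order, using Assumption~\ref{a3} for local smoothness and Assumption~\ref{a12} to factor out $T$. Lindeberg for finite-dimensional distributions plus tightness via a bracketing or modulus bound for the class $\{\sum_t(U_{t+1}-h(R_t))\mathbf 1\{r_0<R_t\le r_0+\delta\}\}$, whose envelope is bounded by $T_0$, then gives convergence to $\sqrt{h(1-h)\mathcal{F}'(r_0)}\,\mathbb{B}$. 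After dividing by $\mathcal{F}'(r_0)$ in the abscissa normalization, this yields $a=\alpha$.

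The second step, which I expect to be the main obstacle, is localization. I need to show that the knots of the GCM of $\hat h_N$ and $\hat h^{(0)}_N$ nearest $r_0$ lie within $O_p(N^{-1/3})$ of $r_0$. Only then does the GCM of $\mathbb{X}_N$ restricted to a large compact $[-K,K]$ agree with the global one on $[-k,k]$ with probability close to one. I would first use the switching relation $\hat h_N(r)>a \iff \operatorname{argmin}_r\{\mathbb{V}_N(r)-a\mathbb{F}_N(r)\}<\cdots$ together with a rate theorem of van der Vaart--Wellner type (modulus $\phi(\delta)=\sqrt\delta$ from the variance computation above, and curvature $\delta^2$ from $h'(r_0)\ne0$ and Assumption~\ref{a2}). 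This gives $N^{1/3}|\hat h_N(r_0+N^{-1/3}x)-h(r_0)|=O_p(1)$ uniformly on compacts. I will also establish consistency of $\hat h_N$ near $r_0$. The constrained estimator is handled analogously, following the arguments of \cite{banerjee2001likelihood}.

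Finally, I will combine these pieces. The GCM-slope map is continuous from $\mathcal{L}_\infty$ processes with unique minorant structure into $\mathcal{L}_2$ on compacts, since monotone functions converge in $L_2$ wherever the limit is continuous almost everywhere. Joint convergence of $(\Omega_N,\Omega^{(0)}_N)$ follows because both are functionals of the same process $\mathbb{X}_N$. Applying the continuous mapping theorem on $[-K,K]$ and letting $K\to\infty$ using the localization bounds yields $(\Omega_N,\Omega^{(0)}_N)\to(g_{\alpha,\beta},g^{(0)}_{\alpha,\beta})$ in $\mathcal{L}_2\times\mathcal{L}_2$. As a consistency check, Brownian scaling of $g_{\alpha,\beta}(0)$ recovers the constant $(4\alpha^2\beta)^{1/3}$ appearing in Theorem~\ref{thm1}.
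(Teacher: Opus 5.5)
Your outline is correct, but it is organized differently from the paper's proof.

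\textbf{Shared first step.} Your ordinate and (corrected) abscissa are exactly the paper's $\mathbb{M}_N$ and $\mathbb{U}_N$. The weak limit $(\mathbb{M}_N,\mathbb{U}_N)\to(\widetilde{\mathbb{B}}_{\alpha,\beta},\mathrm{id})$ in $\mathcal{L}_\infty\times\mathcal{L}_\infty$ is common to both arguments.

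\textbf{Where the routes split.} After that, the paper never applies continuous mapping to the GCM-slope functional.
\begin{itemize}
\item It proves only the finite-dimensional convergence \eqref{eq:fdd}, using the switching relations \eqref{eq:rN4}--\eqref{eq:rN5}. These turn $\{\Omega_N(\delta)\le x\}$, and the one-sided events for $\Omega^{(0)}_N$, into events about the argmins $\widetilde{\mathcal{R}}_N(x)$ and $\widetilde{\mathcal{R}}^{\pm}_N(x)$ of $\mathbb{M}_N-x\,\mathbb{U}_N$.
\item It then applies the argmin continuous mapping theorem.
\item Finally it upgrades to $\mathcal{L}_2\times\mathcal{L}_2$ using only the monotonicity of $\Omega_N,\Omega^{(0)}_N$ (Corollary 2 of \cite{MR1311975}).
\end{itemize}
In that route, the localization you single out as the main obstacle reduces to the statement that these argmins are $O_p(1)$. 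That is the same rate theorem you invoke. The random abscissa costs nothing there, because $\mathbb{U}_N$ enters linearly in $\mathbb{M}_N-x\,\mathbb{U}_N$.

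\textbf{What each route costs and buys.} You must additionally prove that the GCM over $[-K,K]$ coincides with the global one on $[-k,k]$ with high probability, for the full and the one-sided diagrams. You also have to handle the time change $\mathbb{M}_N\circ\mathbb{U}_N^{-1}$ before continuity of the slope map applies. In return, your route gives process-level and joint convergence of $(\Omega_N,\Omega^{(0)}_N)$ in one stroke. The paper's finite-dimensional argument instead needs the argmins for several slopes $x$, across both one-sided problems, to converge jointly. That is a joint version of the argmin theorem which the paper uses only implicitly.

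\textbf{Smaller points.}
\begin{itemize}
\item You do not need to strengthen the model to a martingale-difference structure. The paper uses only $\mathbb{E}(U_{t+1}\mid R_t)=h(R_t)$ and handles the within-user cross terms $t\neq s$ through the bivariate smoothness in Assumption~\ref{a3}. That assumption gives $\mathbb{P}(r_0<R_t\le r_0+N^{-1/3}\delta_1,\ r_0<R_s\le r_0+N^{-1/3}\delta_2)=O(N^{-2/3})$. After the $N^{1/3}$ normalization of the covariance, these terms contribute $O(N^{-1/3})$. The same bound works for your noise term centered at $h(R_t)$.
\item Your abscissa carries one factor $N^{1/3}$ too many; as written it is of order $N^{1/3}x$. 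It should be $N^{1/3}\mathcal{F}'(r_0)^{-1}(\mathbb{F}_N(r_0+N^{-1/3}x)-\mathbb{F}_N(r_0))$.
\item Brownian scaling gives $g_{\alpha,\beta}(0)\stackrel{d}{=}2\alpha^{2/3}\beta^{1/3}\mathbb{Z}=(8\alpha^2\beta)^{1/3}\mathbb{Z}$, not $(4\alpha^2\beta)^{1/3}\mathbb{Z}$. The former is what matches Theorem~\ref{thm1}.
\end{itemize}
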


The asymptotic distribution of the likelihood ratio statistic is given by the following result.
\begin{theorem}\label{thm:lrt} Let Assumptions~\ref{a1} - \ref{a4} hold. Under the null hypothesis $H_0 : h(r_0) = h_0$,
\begin{align*}
2 \log \lambda_{N}(h_0)
& \stackrel{d}{\rightarrow} \mathbb{D} \;:=\; \int \Big\{ (g_{1,1}(x))^2 - (g^{0}_{1,1}(x))^2 \Big\}\, dx.
\end{align*}
\end{theorem}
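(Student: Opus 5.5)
The approach follows the Banerjee--Wellner strategy (\cite{banerjee2001likelihood}): reduce the likelihood ratio statistic to a local quadratic functional of the two estimators, and then apply Theorem~\ref{thm2}.

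\textbf{Step 1: Taylor expansion of the log-likelihood.} Write $D_N$ for the set of indices $s$ at which $\hat{h}_N(R_{(s)}) \neq \hat{h}^{(0)}_N(R_{(s)})$. Viewed through \eqref{eq:ll}, the log-likelihood is a sum of Bernoulli log-likelihoods in $h_s$, and $2\log\lambda_N(h_0)$ only involves these indices. Using the Kuhn--Tucker (block-averaging) characterization of the solutions \eqref{eq:hath}, \eqref{eq:hath01} and \eqref{eq:hath02}, on each constancy block $B$ of $\hat{h}_N$ (respectively of $\hat{h}^{(0)}_N$, away from the block pinned at $h_0$) the fitted value equals the weighted mean $\sum_{s\in B}F_s\overline{U}_s/\sum_{s\in B}F_s$. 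Expanding $u\log h+(1-u)\log(1-h)$ to second order around $h(r_0)$ then gives
\[
2\log\lambda_N(h_0)=\frac{1}{h_0(1-h_0)}\sum_{s\in D_N}F_s\Big\{(\hat{h}_N(R_{(s)})-h_0)^2-(\hat{h}^{(0)}_N(R_{(s)})-h_0)^2\Big\}+o_p(1).
\]
The cross terms cancel by the block-average property. For the block of $\hat{h}^{(0)}_N$ fixed at $h_0$, the linear term vanishes because its value is exactly $h_0$. The cubic remainder is $O_p(N^{-1/3})$, since $|D_N|$-weighted mass is $O_p(N^{2/3})$ and the deviations are $O_p(N^{-1/3})$.

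\textbf{Step 2: Localization.} We need $D_N\subset\{s: |R_{(s)}-r_0|\le KN^{-1/3}\}$ with probability tending to one as $K\to\infty$. This follows from the standard argument that the jump points of $\hat{h}_N$ and $\hat{h}^{(0)}_N$ nearest to $r_0$ lie within $O_p(N^{-1/3})$, and that the two estimators coincide outside the interval between such jumps. These rate results are the same ingredients as in the proofs of Theorems~\ref{thm1}--\ref{thm2}, and I would reuse them. On this neighbourhood, replace $\sum_s F_s\,\phi(R_{(s)})$ by $N\int\phi\,d\mathcal{F}_N$, where $\mathcal{F}_N(r)=N^{-1}\sum_j\sum_t\mathbf{1}\{R^{(j)}_t\le r\}$. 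Since $T\le T_0$ (Assumption~\ref{a1}), $\mathcal{F}_N\to\mathcal{F}$ uniformly at the $\sqrt N$ rate. Changing variables $r=r_0+N^{-1/3}x$ and using Assumption~\ref{a3} yields
\[
2\log\lambda_N(h_0)=\frac{\mathcal{F}'(r_0)}{h_0(1-h_0)}\int_{-K}^{K}\big\{\Omega_N(x)^2-\Omega^{(0)}_N(x)^2\big\}\,dx+o_p(1).
\]

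\textbf{Step 3: Passing to the limit.} The map $(f,g)\mapsto\int_{-K}^K(f^2-g^2)$ is continuous on $\mathcal{L}_2\times\mathcal{L}_2$, so Theorem~\ref{thm2} and the continuous mapping theorem give convergence to
\[
\frac{1}{\alpha^2}\int_{-K}^K\big\{g_{\alpha,\beta}^2-(g^{(0)}_{\alpha,\beta})^2\big\}\,dx.
\]
Letting $K\to\infty$ is justified because the limiting processes also agree outside a tight interval. Finally, Brownian scaling gives $g_{a,b}(x)\stackrel{d}{=}a(b/a)^{1/3}g_{1,1}((b/a)^{2/3}x)$, jointly with $g^{(0)}$. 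Substituting, the factor $a^2(b/a)^{2/3}(b/a)^{-2/3}/a^2=1$ makes the constants cancel, and the limit is $\mathbb{D}$.

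\textbf{Main obstacle.} I expect the hardest step to be Step 2, specifically the claim that $\hat{h}_N$ and $\hat{h}^{(0)}_N$ differ only on an $O_p(N^{-1/3})$ neighbourhood. The observations $R^{(j)}_t$ within a user are dependent, so the empirical-process bounds must be carried out at the user level, treating each user's whole trajectory as one i.i.d. unit (cluster). The tools are the bounded $T$ and the bivariate smoothness in Assumption~\ref{a3}. I would first try to recycle the modulus-of-continuity bounds already established for Theorem~\ref{thm2}.
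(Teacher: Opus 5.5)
Your proposal is correct and is essentially the paper's Banerjee--Wellner argument. It uses the same steps: a second-order expansion whose linear term is converted via the block-average (Kuhn--Tucker) property, localization of the set where $\hat{h}_N$ and $\hat{h}^{(0)}_N$ differ to an $O_p(N^{-1/3})$ window (the paper's Lemma~\ref{lem1}, obtained from Theorem~\ref{thm2}), reduction to $\alpha^{-2}\int_{-K}^{K}\{\Omega_N^2-(\Omega_N^{(0)})^2\}\,dx$, continuous mapping in $\mathcal{L}_2$, and Brownian scaling as in Proposition~\ref{lem:dist}.

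The differences are cosmetic. You expand in the mean parameter $h$ rather than the logit $\psi$, so the curvature of $\overline{U}_s\log h+(1-\overline{U}_s)\log(1-h)$ at $h_0$, which is linear in $\overline{U}_s$, must be block-averaged a second time. You also replace the paper's Donsker argument for $\gamma_N$ by the uniform $\sqrt{N}$ rate of $\mathbb{F}_N$; this needs the easy extra remark that $\Omega_N^2-(\Omega^{(0)}_N)^2$ has $O_p(1)$ total variation on $[-K,K]$, so that integration by parts applies.
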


Using Theorem~\ref{thm:lrt}, it is straightforward to construct likelihood ratio based 
confidence sets for \(h(r_{0})\). 
An asymptotic \((1-\alpha)\)-level confidence set for \(h(r_{0})\) is given by
\[
\big\{h_0 : 2 \log \lambda_{n}(h_0) \leq q_{\mathbb{D},1-\alpha} \big\},
\]
where \(q_{\mathbb{D},1-\alpha}\) denotes the \((1-\alpha)\)-quantile of \(\mathbb{D}\)
(quantiles of $\mathbb{D}$ are well-known and can be found, for example, in \cite{banerjee2001likelihood}). 
The construction of confidence sets in this framework reduces to computing the likelihood 
ratio statistic across a family of null hypotheses. This computation is straightforward and 
can be carried out using appropriate versions of the Pool Adjacent Violators Algorithm. 
The estimation of nuisance parameters is the central difficulty in constructing the Wald-type 
confidence intervals. 
By contrast, the likelihood ratio based method avoids nuisance parameter estimation entirely and 
thus provides a remarkably clean  
procedure for constructing confidence intervals 
for \(h(r_{0})\), making it the more appealing choice.  

\section{Simulation}\label{sec4}

\subsection{Simulation Setup}\label{sec:ss}

To evaluate the performance of our estimator, we conduct a series of simulation studies. The data-generating mechanism is based on the strictly increasing preference function
\[
h(r) \;=\; \tfrac{2}{5} r^{2} + \tfrac{3}{10}, \quad r \in [0,1].
\]
This choice of preference function is motivated by the fact that it corresponds to the one-dimensional generalized elephant random walk with $p = 7/10$ and $f(x) = x^2$ (see \eqref{eq:gerwf} and the discussion at the end of Section~\ref{sec2.1}).
Each simulated dataset consists of  
$N$ users. 
We vary the number of 
users across the settings
$N \in \{300, 600, 900, 1200, 1500\}$,
to examine finite-sample behavior as the sample size increases. We consider two settings for the number of choices ($T+1$) made by each user: 

\begin{enumerate}[nosep]
    \item {Constant:} \(T \sim \delta_{20}\) i.e., $T$ is degenerate at $20$. 
    \item {Random (Truncated Poisson):} 
    For $X \sim \operatorname{Poisson} (20)$, $T$ is given by
    \[
       T = \begin{cases}4, &X \leq 4, \\[-.4cm] X, &4 < X < 20,  \\[-.4cm] 20, &20 \leq X.\end{cases}
    \]
\end{enumerate}
The intensity of a user’s choice at each setup is modeled under three regimes:
\begin{enumerate}[nosep]
    \item {Equal intensity:} 
    \(W_{t} \overset{iid}{\sim} \delta_1\), i.e., all choices have equal weights degenerate at $1$.
    \item {Uniform intensity:} 
    \(W_{t} \overset{iid}{\sim} U(0,1)\), i.e., the strength of choices vary 
    uniformly over $(0,1)$. 
    \item Persistent intensity:
For \(U_{t} \overset{iid}{\sim} U(0,1)\), $(W_t)_{t \geq 1}$ is given by 
    \[
        W_1 = U_1, \quad W_t = \begin{cases} W_{t-1}, &\text{ with probability } 0.2,\\[-.4cm] U_t, &\text{ with probability } 0.8,\end{cases} \quad t \geq 2.
    \]
\end{enumerate}
We use \(\lfloor T/2 \rfloor N\) observations, corresponding to the second through $(\lfloor T/2 \rfloor + 1)$-th choices for each user, to train the estimator. The remaining \(\lfloor T/2 \rfloor\) 
choices for each user play the role of test data to assess the estimator's performance. Each experiment was independently replicated {100 times} to assess variability and 
ensure reliable comparisons across settings.

\subsection{Evaluation Metrics}

The last \(\lfloor T/2 \rfloor\)  choices and the corresponding predictor covariates 
of each user are reserved for testing. The resulting test dataset is
\[
\big\{ \big(U^{(j)}_{t+1}, R^{(j)}_{t}\big) : t = \lfloor T/2 \rfloor + 1, \ldots, T,\; j=1,\ldots,N \big\},
\]
consisting of \(\lfloor T/2 \rfloor N\) pairs of observations. To assess the performance of the proposed estimator, we consider two
evaluation metrics: the {test error}, which measures predictive accuracy, 
and the {expected calibration error (ECE)}, which evaluates the calibration 
of the estimated probabilities of the user choices.

\subsubsection{Test Error}

To compute the test  error, we compare $\hat{h}_N$ with an \emph{oracle estimate} ${h}_{oracle}$ of the true preference 
function $h$. The oracle estimate is a Nadaraya–Watson type estimate of $h$, obtained using the \emph{unobserved} test data in the following way:
\[
{h}_{oracle}(r)
:= \frac{ \sum_{j=1}^{N}\sum_{t=\lfloor T/2 \rfloor + 1}^{T}
   U^{(j)}_{t+1}\,\phi\left(\frac{r - R^{(j)}_{t}}{\zeta} \right)}{\sum_{j=1}^{N}\sum_{t=\lfloor T/2 \rfloor + 1}^{T}
          \phi\left(\frac{r - R^{(j)}_{t}}{\zeta}\right)}, \quad r \in (0,1),
\]
where, $\phi$ is the standard normal density and $\zeta$ is a bandwidth parameter. We set $\zeta = 0.01$ throughout our experiments. The test error is defined as the average absolute deviation between the estimated 
preference function \(\hat{h}_N\) and the oracle estimate $h_{oracle}$, evaluated over a discrete grid $S_R := \left\{\frac{x}{99} : x  = 0, \ldots, 99\right\}$ of 100 equidistant values of $r$ in $[0,1]$: 
\[
\text{Test Error} \;=\; \frac{1}{|S_R|} \sum_{r \in S_R} 
\big| {h}_{oracle}(r) - \hat{h}_N(r) \big|.
\]
Intuitively, this metric compares the predicted probabilities of the user choices
with the empirical (weighted) frequencies 
observed in the test data. 
Thus the oracle estimator serves as a benchmark constructed using the unobserved test data, against which we compare the predictions of the monotone estimator $\hat{h}_N$ obtained from the training data. One may also consider a monotone version of the oracle estimator; however, in our experiments we use the unconstrained oracle estimate in order to benchmark against a less restricted estimate obtained from the test data.

\subsubsection{Expected Calibration Error (ECE)}

To quantify calibration, we compute the 
{expected calibration error (ECE)} following standard reliability 
analysis methods used in probabilistic prediction (see, for example, \cite{PakdamanNaeini_Cooper_Hauskrecht_2015} and the references therein). The predicted probabilities \(\hat{h}_N\left(R^{(j)}_{t}\right)\), 
for \(t = \lfloor T/2 \rfloor + 1, \ldots, T,\; j=1,\ldots,N\),  are grouped into \(B = 50\) equal-width bins on \([0,1]\). For each bin \(b \subset [0,1]\), let 
\[
   I_b := \big\{ t, j : t \in \{\lfloor T/2 \rfloor + 1, \ldots, T\},\; j \in \{1,\ldots,N\},\; R^{(j)}_{t} \in b\big\},
\]
and $n_b := |I_b|$. Further let
\[
c_b := \frac{1}{n_b} \sum_{(t,j) \in I_b } \hat{h}_N\left(R^{(j)}_{t}\right)
\quad \text{and} \quad 
a_b := \frac{1}{n_b} \sum_{(t,j) \in I_b } U^{(j)}_{t+1}
\]
denote the average predicted probability ({confidence}) and the average empirical 
frequency ({accuracy}), respectively. The expected calibration error (ECE) is then given by
\[
\mathrm{ECE} = \sum_{b=1}^{B}  \frac{n_b}{\lfloor T/2 \rfloor N}
\, \big| c_b - a_b \big|.
\]
Smaller ECE-values indicate better-calibrated probability 
estimates.

\subsection{Simulation Results}

We evaluate the performance of our estimator under all six configurations arising from the two settings for the number of choices and the three choice-intensity regimes described in Section~\ref{sec:ss}. For each configuration, we compute the mean test error and the mean expected calibration error (ECE), averaged over 100 replicated  experiments and across the considered sample sizes~\(N\). These results, along with the corresponding standard deviations, are summarized in Table~\ref{tab:sim1}. 
We also report the average (across 100 replicated experiments) length and coverage probability (empirical proportion of replications in which the confidence set contains the true value) of the $95\%$ confidence set for \(h(1/3)\) in Table~\ref{tab:sim2}.

In Table \ref{tab:sim1}, the mean test error, as well as the corresponding standard deviation, decreases steadily with increasing sample size $N$ in every configurations. At the same time, the mean expected calibration error (ECE) and the associated standard deviation also declines consistently, indicating that the estimated preference probabilities become increasingly well-calibrated. Notably, the mean expected calibration errors (ECE) are already small even at moderate sample sizes. Table \ref{tab:sim2} shows that the obtained confidence intervals exhibit desirable behavior. The average length of the confidence sets decreases monotonically with $N$, indicating improved precision. Importantly, this gain in precision does not come at the expense of validity. The coverage probabilities remain close to the nominal $0.95$ across all sample sizes and configurations, with no systematic evidence of under-coverage.

Taken together, these findings demonstrate that the proposed estimator provides accurate and well-calibrated estimates of preference probabilities with valid confidence intervals. Furthermore, larger sample sizes improve estimation precision and yield tighter confidence sets while maintaining appropriate coverage.

\begin{table}[htbp]
\centering
\renewcommand{\arraystretch}{0.75}

\begin{subtable}{0.48\textwidth}
\setlength{\tabcolsep}{10pt}
\begin{tabular}{ccc}
\toprule
$N$ & Test Error & ECE \\
\midrule
300  & 0.062 (0.012) & 0.024 (0.008) \\
600  & 0.043 (0.007) & 0.019 (0.006) \\
900  & 0.035 (0.005) & 0.016 (0.005) \\
1200 & 0.031 (0.005) & 0.014 (0.004) \\
1500 & 0.028 (0.004) & 0.014 (0.004) \\
\bottomrule
\end{tabular}
\centering
\caption{\centering Configuration 1: equal intensity \\ and constant number of choices}
\end{subtable}
\hfill
\begin{subtable}{0.48\textwidth}
\setlength{\tabcolsep}{10pt}
\begin{tabular}{ccc}
\toprule
$N$ & Test Error & ECE \\
\midrule
300  & 0.061 (0.01) & 0.027 (0.008) \\
600  & 0.042 (0.006) & 0.02 (0.005) \\
900  & 0.035 (0.006) & 0.017 (0.005) \\
1200 & 0.031 (0.005) & 0.015 (0.004) \\
1500 & 0.029 (0.004) & 0.015 (0.004) \\
\bottomrule
\end{tabular}
\centering
\caption{\centering Configuration 2: equal intensity \\ and random number of choices}
\end{subtable}

\vspace{0.4cm}

\begin{subtable}{0.48\textwidth}
\setlength{\tabcolsep}{10pt}
\begin{tabular}{ccc}
\toprule
$N$ & Test Error & ECE \\
\midrule
300  & 0.051 (0.007) & 0.026 (0.007) \\
600  & 0.037 (0.005) & 0.021 (0.005) \\
900  & 0.03 (0.004) & 0.018 (0.004) \\
1200 & 0.027 (0.003) & 0.016 (0.004) \\
1500 & 0.024 (0.003) & 0.014 (0.004) \\
\bottomrule
\end{tabular}
\centering
\caption{\centering Configuration 3: uniform intensity \\ and constant number of choices}
\end{subtable}
\hfill
\begin{subtable}{0.48\textwidth}
\setlength{\tabcolsep}{10pt}
\begin{tabular}{ccc}
\toprule
$N$ & Test Error & ECE \\
\midrule
300  & 0.052 (0.007) & 0.029 (0.008) \\
600  & 0.038 (0.005) & 0.022 (0.006) \\
900  & 0.031 (0.004) & 0.019 (0.005) \\
1200 & 0.027 (0.004) & 0.016 (0.004) \\
1500 & 0.025 (0.003) & 0.015 (0.004) \\
\bottomrule
\end{tabular}
\centering
\caption{\centering Configuration 4: uniform intensity \\ and random number of choices}
\end{subtable}

\vspace{0.4cm}

\begin{subtable}{0.48\textwidth}
\setlength{\tabcolsep}{10pt}
\begin{tabular}{ccc}
\toprule
$N$ & Test Error & ECE \\
\midrule
300  & 0.053 (0.008) & 0.026 (0.009) \\
600  & 0.037 (0.005) & 0.02 (0.005) \\
900  & 0.031 (0.005) & 0.017 (0.005) \\
1200 & 0.027 (0.004) & 0.015 (0.004) \\
1500 & 0.025 (0.004) & 0.014 (0.004) \\
\bottomrule
\end{tabular}
\centering
\caption{\centering Configuration 5: persistent intensity \\ and constant number of choices}
\end{subtable}
\hfill
\begin{subtable}{0.48\textwidth}
\setlength{\tabcolsep}{10pt}
\begin{tabular}{ccc}
\toprule
$N$ & Test Error & ECE \\
\midrule
300  & 0.053 (0.007) & 0.028 (0.009) \\
600  & 0.038 (0.005) & 0.022 (0.006) \\
900  & 0.031 (0.004) & 0.018 (0.005) \\
1200 & 0.028 (0.003) & 0.016 (0.004) \\
1500 & 0.025 (0.003) & 0.015 (0.004) \\
\bottomrule
\end{tabular}
\centering
\caption{\centering Configuration 6: persistent intensity \\ and random number of choices}
\end{subtable}
\centering
\caption{\centering Simulation results under six configurations: mean test error and mean expected calibration error (ECE) (standard deviation within parentheses) across different sample sizes.}
\label{tab:sim1}
\end{table}

\begin{table}[htbp]
\centering
\renewcommand{\arraystretch}{0.75}
\begin{subtable}{0.48\textwidth}
\setlength{\tabcolsep}{10pt}
\begin{tabular}{ccc}
\toprule
$N$ & Length & Coverage\\
\midrule
300  & 0.087 & 0.94\\
600  & 0.069 & 0.93 \\
900  & 0.062 & 0.96 \\
1200 & 0.056 & 0.99 \\
1500 & 0.053 & 0.94 \\
\bottomrule
\end{tabular}
\centering
\caption{\centering Configuration 1: equal intensity \\ and constant number of choices}
\end{subtable}
\hfill
\begin{subtable}{0.48\textwidth}
\setlength{\tabcolsep}{10pt}
\begin{tabular}{ccc}
\toprule
$N$ & Length & Coverage\\
\midrule
300  & 0.094 & 0.96\\
600  & 0.078 & 0.97\\
900  & 0.068 & 0.94\\
1200 & 0.062 & 0.94\\
1500 & 0.059 & 0.95\\
\bottomrule
\end{tabular}
\centering
\caption{\centering Configuration 2: equal intensity \\ and random number of choices}
\end{subtable}

\vspace{0.4cm}

\begin{subtable}{0.48\textwidth}
\setlength{\tabcolsep}{10pt}
\begin{tabular}{ccc}
\toprule
$N$ & Length & Coverage\\
\midrule
300  & 0.084 & 0.98\\
600  & 0.065 & 0.98\\
900  & 0.058 & 0.94\\
1200 & 0.052 & 0.95\\
1500 & 0.049 & 0.95\\
\bottomrule
\end{tabular}
\centering
\caption{\centering Configuration 3: uniform intensity \\ and constant number of choices}
\end{subtable}
\hfill
\begin{subtable}{0.48\textwidth}
\setlength{\tabcolsep}{10pt}
\begin{tabular}{ccc}
\toprule
$N$ & Length & Coverage\\
\midrule
300  & 0.085 & 0.93\\
600  & 0.068 & 0.98\\
900  & 0.059 & 0.95\\
1200 & 0.052 & 0.92\\
1500 & 0.049 & 0.98\\
\bottomrule
\end{tabular}
\centering
\caption{\centering Configuration 4: uniform intensity \\ and random number of choices}
\end{subtable}

\vspace{0.4cm}

\begin{subtable}{0.48\textwidth}
\setlength{\tabcolsep}{10pt}
\begin{tabular}{ccc}
\toprule
$N$ & Length & Coverage \\
\midrule
300  & 0.081 & 0.95\\
600  & 0.062 & 0.97\\
900  & 0.056 & 0.95\\
1200 & 0.051 & 0.97\\
1500 & 0.047 & 0.95\\
\bottomrule
\end{tabular}
\centering
\caption{\centering Configuration 5: persistent intensity \\ and constant number of choices}
\end{subtable}
\hfill
\begin{subtable}{0.48\textwidth}
\setlength{\tabcolsep}{10pt}
\begin{tabular}{ccc}
\toprule
$N$ & Length & Coverage \\
\midrule
300  & 0.083 & 0.95\\
600  & 0.065 & 0.95\\
900  & 0.056 & 0.91\\
1200 & 0.052 & 0.96\\
1500 & 0.048 & 0.96\\
\bottomrule
\end{tabular}
\centering
\caption{\centering Configuration 6: persistent intensity \\ and random number of choices}
\end{subtable}
\centering
\caption{\centering Simulation results under six configurations: $95\%$ confidence set length and coverage probability for $h(1/3)$ across different sample sizes.}
\label{tab:sim2}

\end{table}

\section{Real Data Analysis}\label{sec5}

One potential application of our framework arises in streaming platforms, where reliably estimating user preference probabilities may help identify genres or types of content that align strongly with individual viewing behavior. Motivated by this application, we apply the proposed method to the MovieLens (20M) dataset (\cite{10.1145/2827872}), a widely used movie recommendation benchmark that records user ratings for movies across different genres. 
Generated on October 17, 2016, it contains 20000263 ratings, created by 138493 users between January 09, 1995 and March 31, 2015, across 27278 movies. All selected users had rated at least 20 movies. 
In our framework, movie genres play the role of options. 

In each specification of the model, we consider a pair of genres (or, a pair of groups of multiple genres) and restrict attention to movies belonging to these two groups, as well as to users who have viewed (and rated) at least 20 movies in total within the selected pair. Each such movie viewed (and rated) by a user is interpreted as a choice between the two groups under consideration. Movies that are classified as belonging to both groups are excluded from the analysis, and if a user rated the same movie more than once, only the earliest rating is retained. These restrictions ensure that each observation corresponds to an unambiguous binary choice and that there is sufficient within-user variation. The pairs of groups (of genres) considered and the corresponding number of \emph{valid} users in each model specification are reported in Table~\ref{tab:models}. 

\begin{table}[h]
\centering
\renewcommand{\arraystretch}{0.75}
\begin{tabular}{|c|c|c|c|}
\hline
Model & {Option $\boldsymbol{c_1}$} & {Option $\boldsymbol{c_2}$} & {Number of Valid Users} \\
\hline
(a) & Action  & Romance & 89752\\
(b) & Comedy  & Romance & 76094\\
(c) & Action, Adventure, Thriller & Romance, Drama & 112038 \\
(d) & Action, Adventure, Thriller & Comedy & 112340 \\
\hline
\end{tabular}
\caption{Model specifications.}
\label{tab:models}
\end{table}

We consider two different choice intensity specification, allowing us to assess whether modeling richer intensity information leads to improved predictive performance, 
\begin{enumerate}[nosep]
\item \textit{Constant intensity:} all choices receive equal weight.
\item \textit{Rating-based intensity:} The intensity of each choice is given by the rating assigned by the user to the corresponding movie.
\end{enumerate}

For evaluation, the dataset constructed for each specification is split into training and testing subsets. Specifically, for $N$, the number of users in a given model specification, the model is trained on $5N$ observations corresponding to the 11th through 15th choices for each user and tested on the 
$5N$ observations corresponding to the 16th through 20th choices for each user. The first 10 choices are excluded from the analysis to reduce noise associated with initial, potentially unstable user behavior. The same evaluation metrics as in the simulation studies, namely the test error and the empirical classification error (ECE) are employed. 

Table~\ref{tab:real} summarizes the results across four model specifications under two intensity choices. These results suggest that incorporating rating-based intensities does not lead to substantial improvements over the constant intensity specification in this dataset. Both approaches perform similarly in terms of predictive accuracy and calibration, indicating that the simpler constant-intensity model may already capture the essential structure of the data. We also provide a visual comparison of these findings: Figure~\ref{fig:error} plots the estimated preference function $\hat{h}_N$ against the oracle estimate $h_{oracle}$, while Figure~\ref{fig:ece} presents reliability plots (predicted probabilities versus observed empirical frequencies) with the diagonal line as a reference.

To illustrate the use of the estimates, we present in Table~\ref{tab:real2} the estimated preference function for Model (b) with constant choice intensity at five representative values of $r$, along with their corresponding confidence sets. In practice, how these estimates are utilized depends on the streaming service provider’s tolerance for risk. For example, a conservative provider may prioritize recommending movies of the comedy genre (option $\boldsymbol{c_1}$ in Model (b)) only to users whose past relative intensity of choosing this genre is $r = 82/99$, as such users exhibit a high estimated preference along with a narrow confidence interval. In contrast, users with past relative intensity $r = 49/99$ have a moderately high estimated preference for the comedy genre but a wider confidence interval, indicating greater uncertainty in the estimate. A more exploratory provider may still recommend movies of this genre to such users. 

\begin{table}[t]
\centering
\renewcommand{\arraystretch}{0.75}

\begin{subtable}{0.48\textwidth}
\setlength{\tabcolsep}{10pt}
\begin{tabular}{ccc}
\toprule
Choice Intensity & Test Error & ECE \\
\midrule
Constant  & 0.024 & 0.019\\
Rating-based  & 0.029 & 0.028\\
\bottomrule
\end{tabular}
\centering
\caption*{\centering Model (a)}
\end{subtable}
\hfill
\begin{subtable}{0.48\textwidth}
\setlength{\tabcolsep}{10pt}
\begin{tabular}{ccc}
\toprule
Choice Intensity & Test Error & ECE \\
\midrule
Constant  & 0.019 & 0.008\\
Rating-based  & 0.021 & 0.008\\
\bottomrule
\end{tabular}
\centering
\caption*{\centering Model (b)}
\end{subtable}

\vspace{0.4cm}

\begin{subtable}{0.48\textwidth}
\setlength{\tabcolsep}{10pt}
\begin{tabular}{ccc}
\toprule
Choice Intensity & Test Error & ECE \\
\midrule
Constant  & 0.026 & 0.014 \\
Rating-based  & 0.026 & 0.015 \\
\bottomrule
\end{tabular}
\centering
\caption*{\centering Model (c)}
\end{subtable}
\hfill
\begin{subtable}{0.48\textwidth}
\setlength{\tabcolsep}{10pt}
\begin{tabular}{ccc}
\toprule
Choice Intensity & Test Error & ECE \\
\midrule
Constant  & 0.031 & 0.019 \\
Rating-based  & 0.033 & 0.017 \\
\bottomrule
\end{tabular}
\centering
\caption*{\centering Model (d)}
\end{subtable}
\centering
\caption{\centering Real data analysis results: test error and expected calibration error (ECE) across different choice intensities and model specifications.}
\label{tab:real}
\end{table}

\begin{table}[t]
\centering
\renewcommand{\arraystretch}{0.75}
\setlength{\tabcolsep}{10pt}
\begin{tabular}{|c|c|c|c|}
\hline
$r$ & ${h}_{oracle}(r)$ & $\hat{h}_N(r)$ & Confidence Set\\
\hline
16/99 & 0.214 & 0.235 & (0.210, 0.301)\\
1/3 & 0.470 & 0.467 & (0.452, 0.534)\\
49/99 & 0.598 & 0.583 & (0.573, 0.611)\\
2/3 & 0.702 & 0.691 & (0.686, 0.718)\\
82/99 & 0.770 & 0.773 & (0.770, 0.778)\\
\hline
\end{tabular}
\centering
\caption{\centering Real data analysis results: Model (b) with constant intensity}
\label{tab:real2}
\end{table}

\begin{figure}[htbp]
\centering

\begin{subfigure}{0.24\textwidth}
  \includegraphics[width=\linewidth]{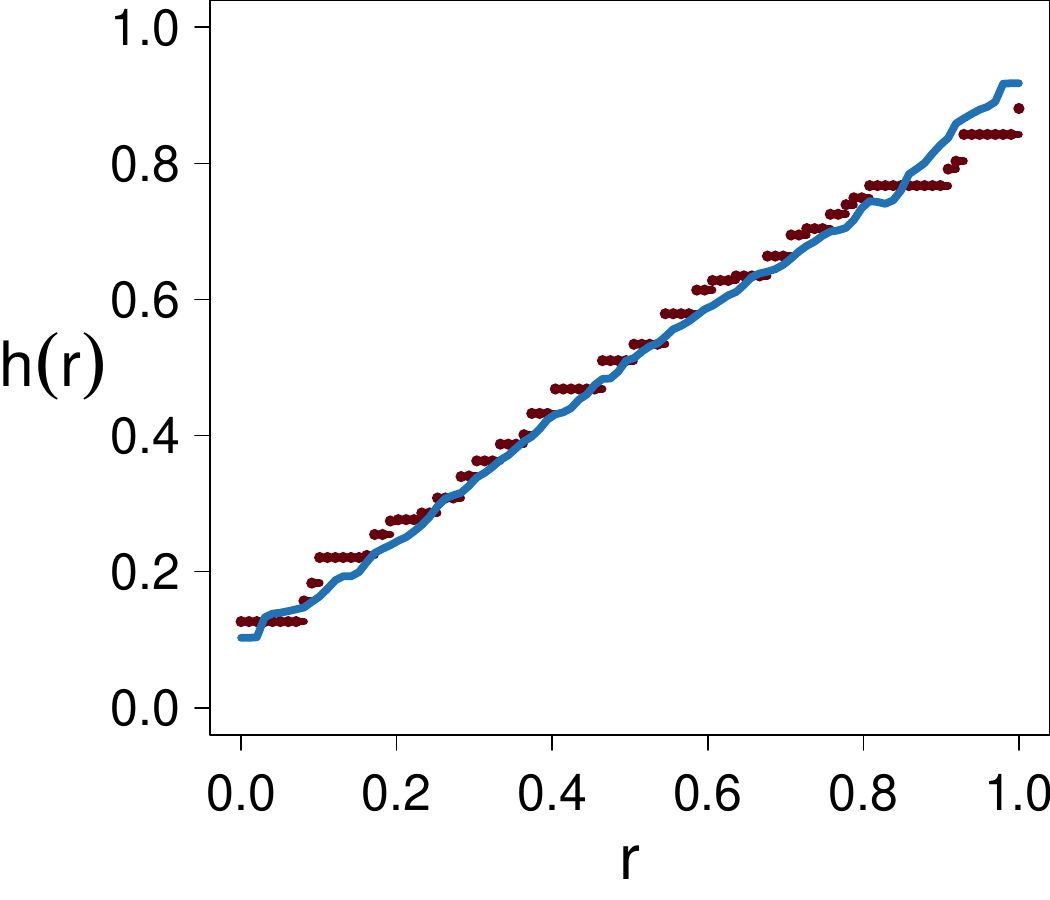}
\end{subfigure}\hfill
\begin{subfigure}{0.24\textwidth}
  \includegraphics[width=\linewidth]{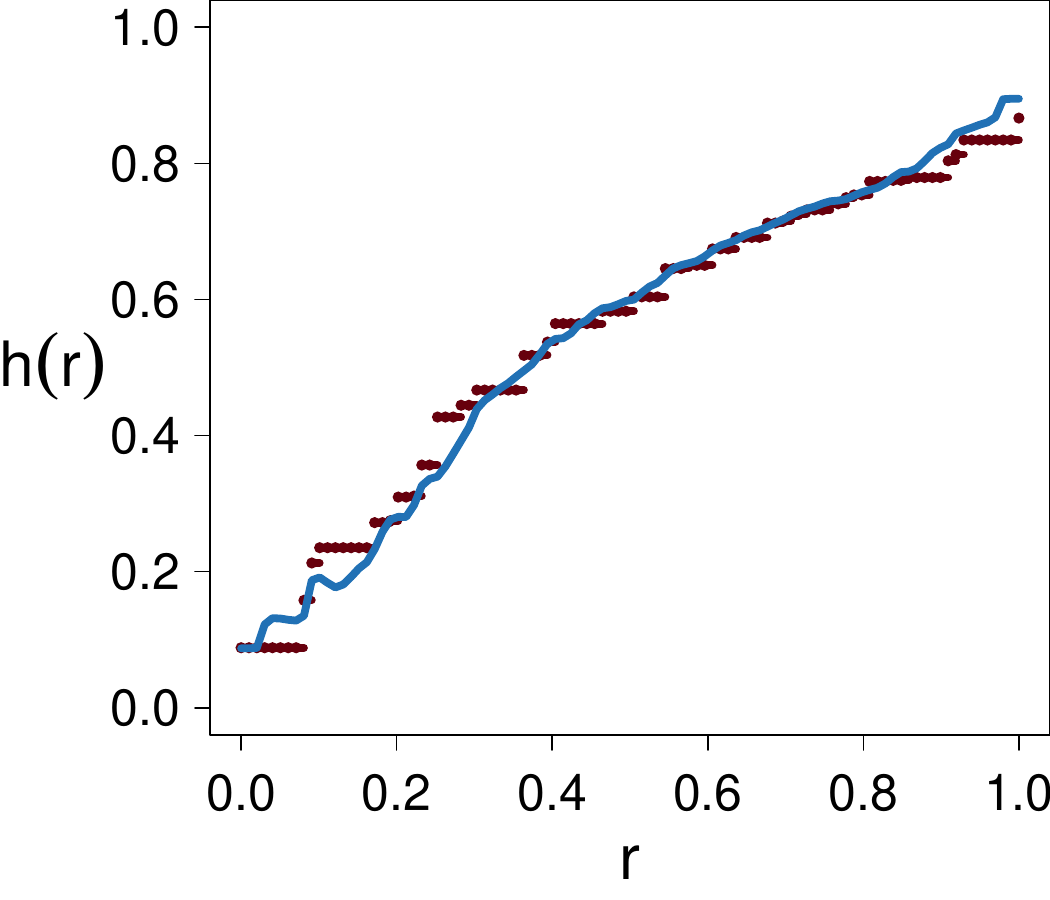}
\end{subfigure}\hfill
\begin{subfigure}{0.24\textwidth}
  \includegraphics[width=\linewidth]{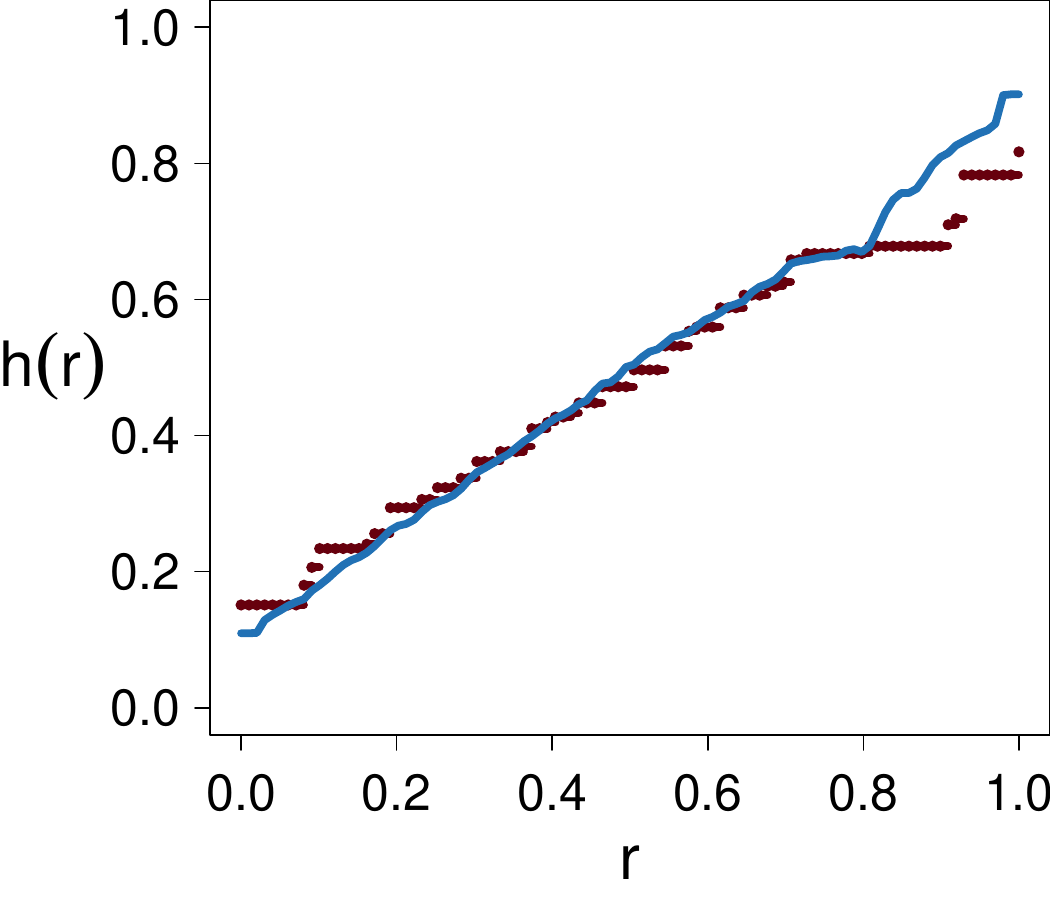}
\end{subfigure}\hfill
\begin{subfigure}{0.24\textwidth}
  \includegraphics[width=\linewidth]{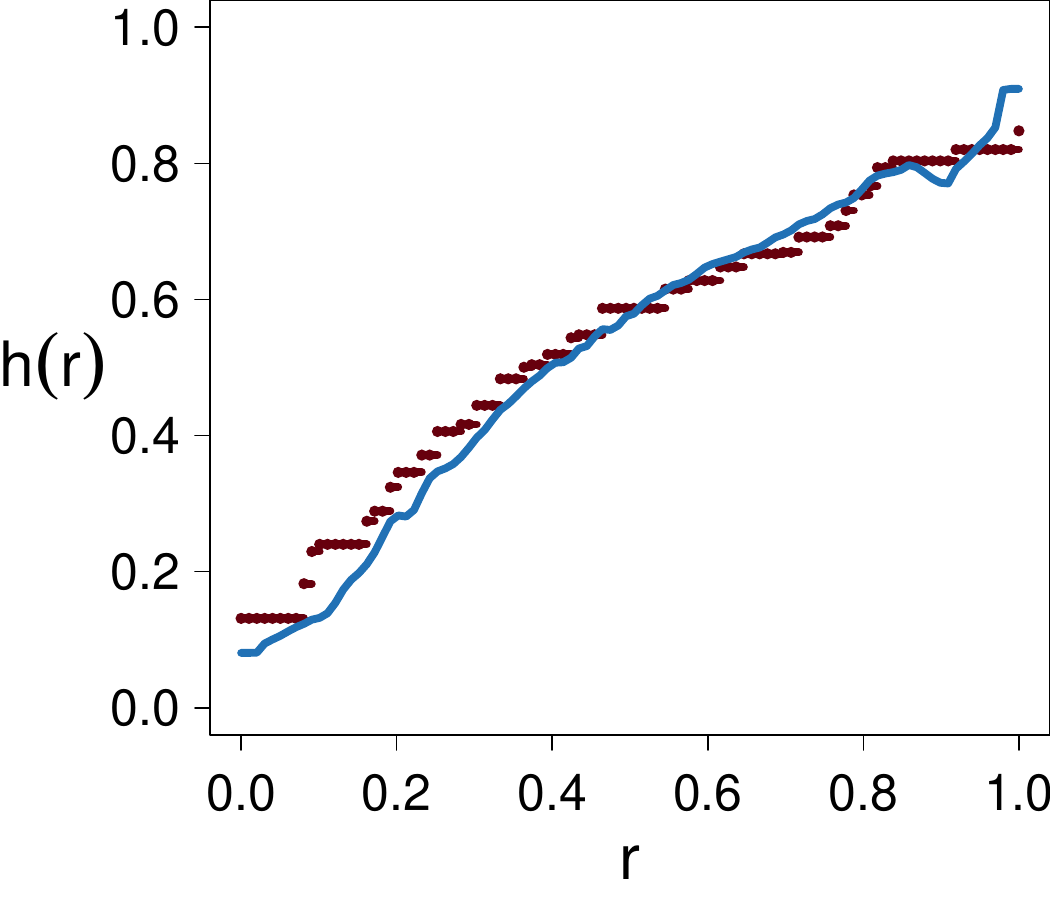}
\end{subfigure}

\vspace{0.2cm}

\begin{subfigure}{0.24\textwidth}
  \includegraphics[width=\linewidth]{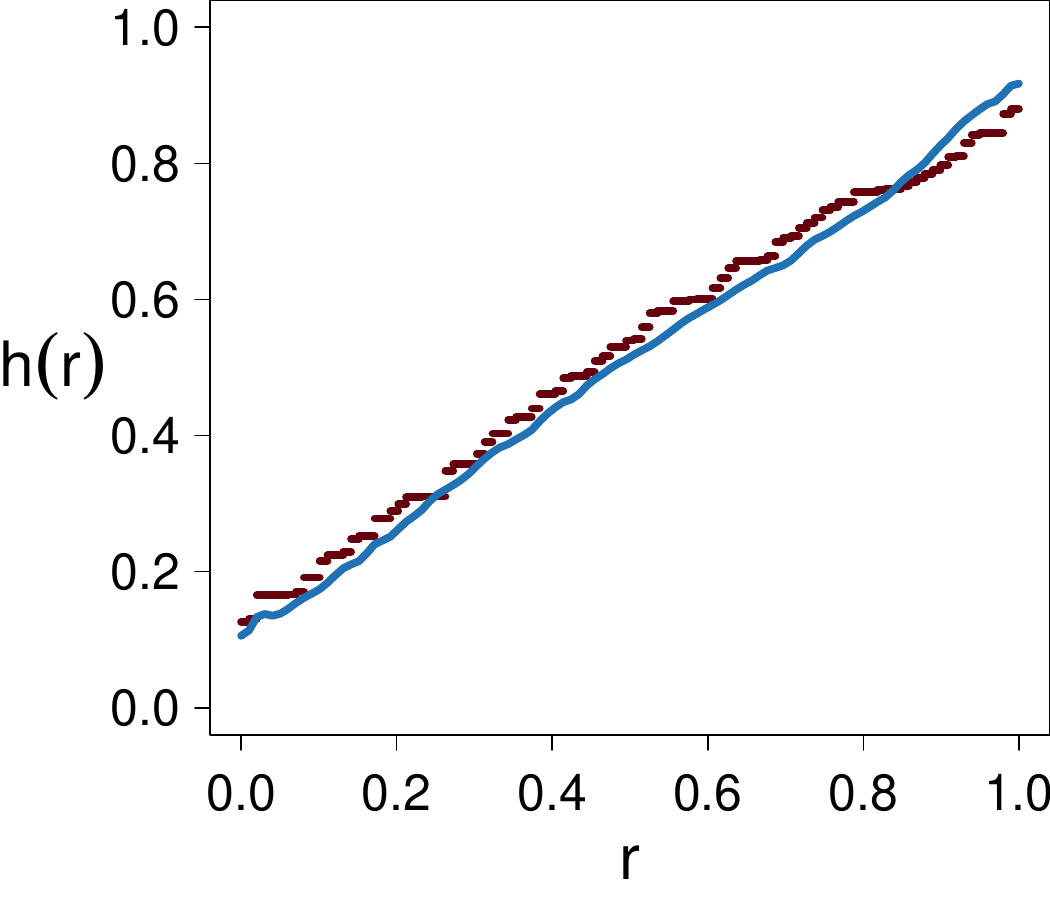}
  \caption*{\centering Model (a)}
\end{subfigure}\hfill
\begin{subfigure}{0.24\textwidth}
  \includegraphics[width=\linewidth]{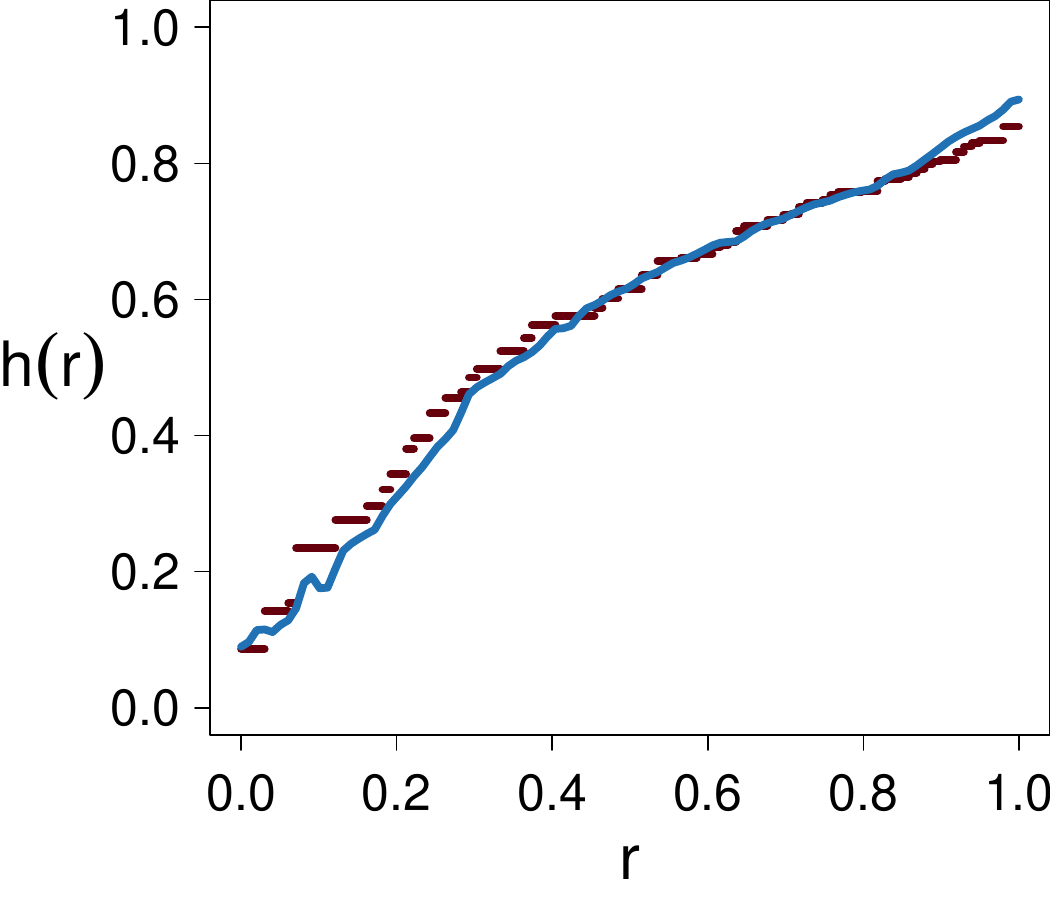}
  \caption*{\centering Model (b)}
\end{subfigure}\hfill
\begin{subfigure}{0.24\textwidth}
  \includegraphics[width=\linewidth]{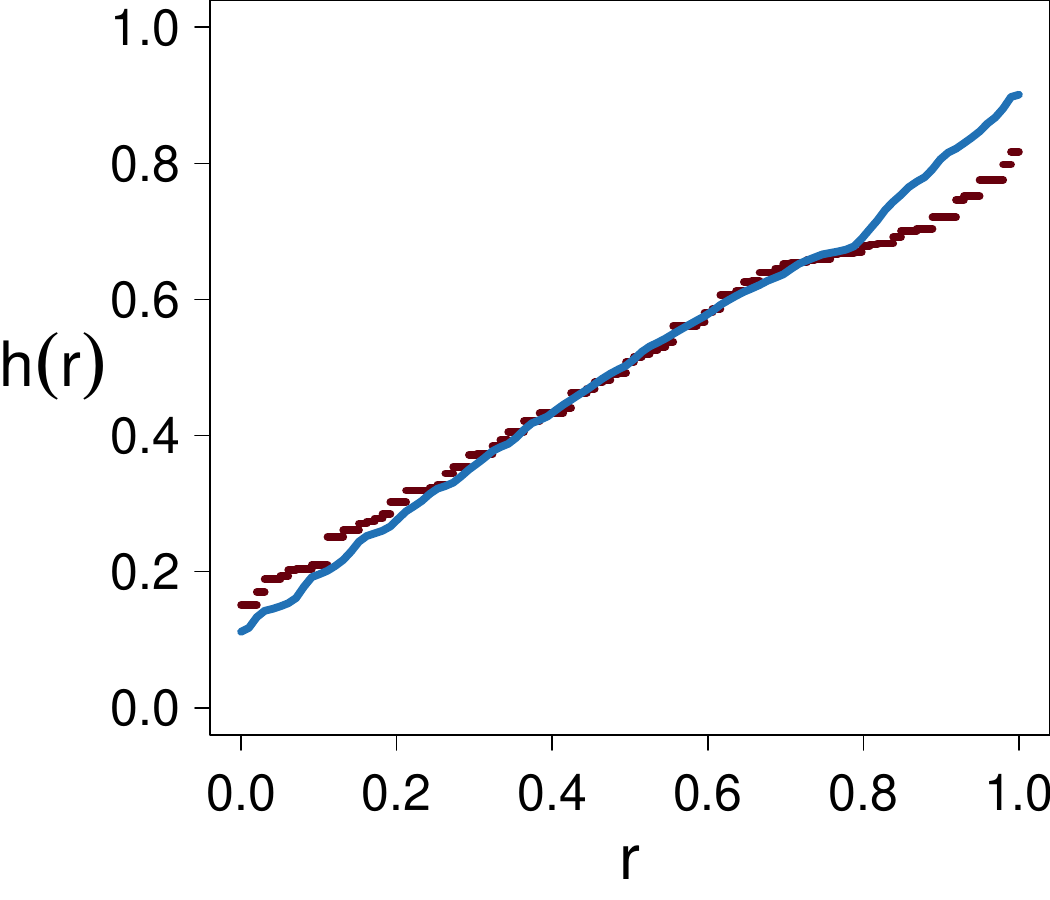}
  \caption*{\centering Model (c)}
\end{subfigure}\hfill
\begin{subfigure}{0.24\textwidth}
  \includegraphics[width=\linewidth]{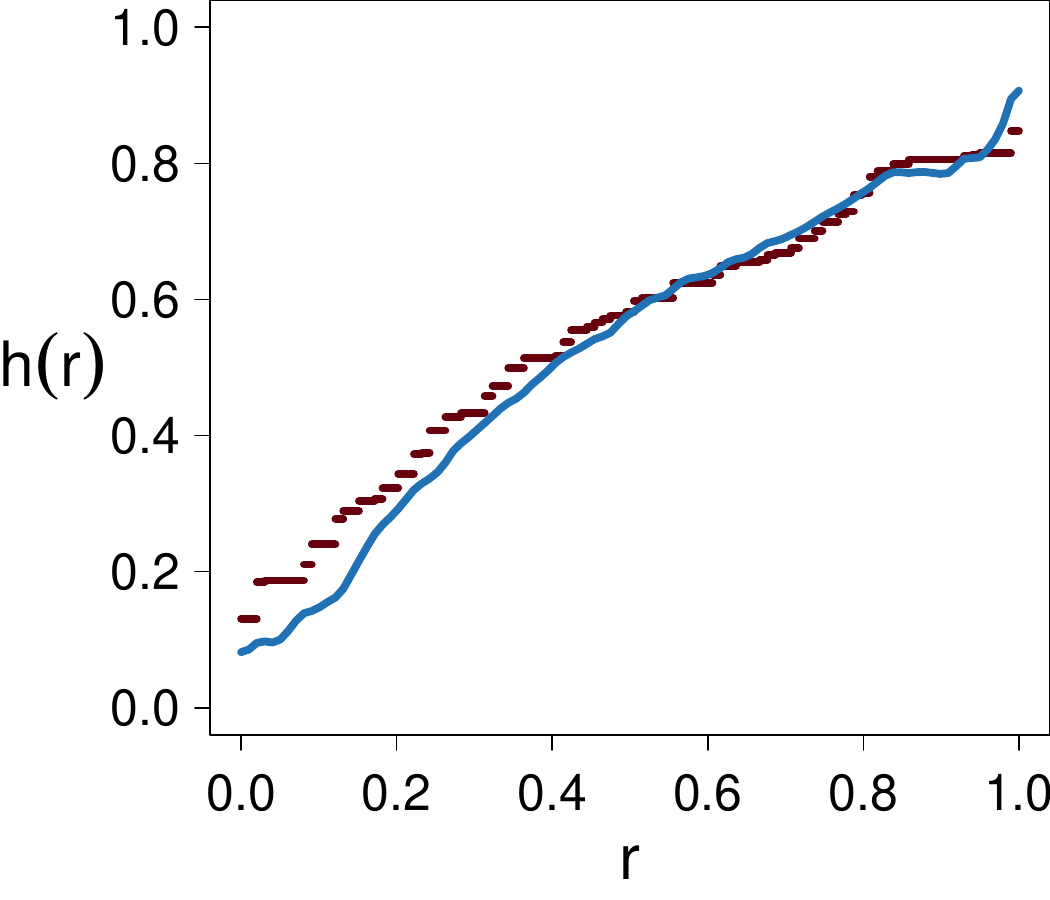}
  \caption*{\centering Model (d)}
\end{subfigure}

\vspace{0.2cm}

\begin{subfigure}{0.6\textwidth}
\centering
  \includegraphics[width=\linewidth]{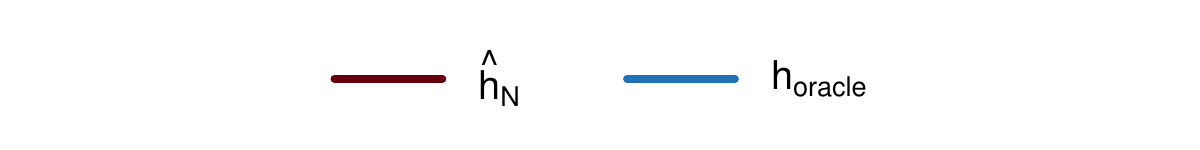}
\end{subfigure}\hfill

\caption{\centering Real data analysis results: top - constant intensity, bottom - rating-based intensity}
\label{fig:error}

\end{figure}

\begin{figure}[htbp]
\centering

\begin{subfigure}{0.24\textwidth}
  \includegraphics[width=\linewidth]{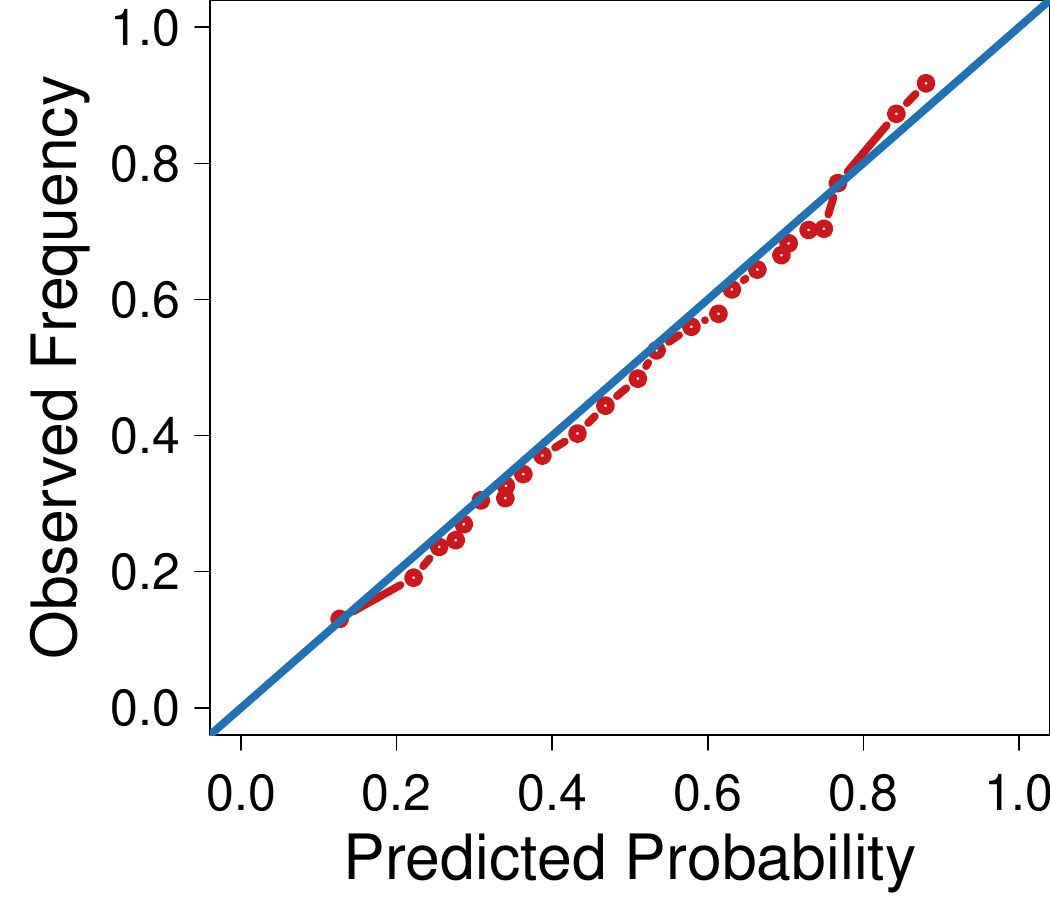}
\end{subfigure}\hfill
\begin{subfigure}{0.24\textwidth}
  \includegraphics[width=\linewidth]{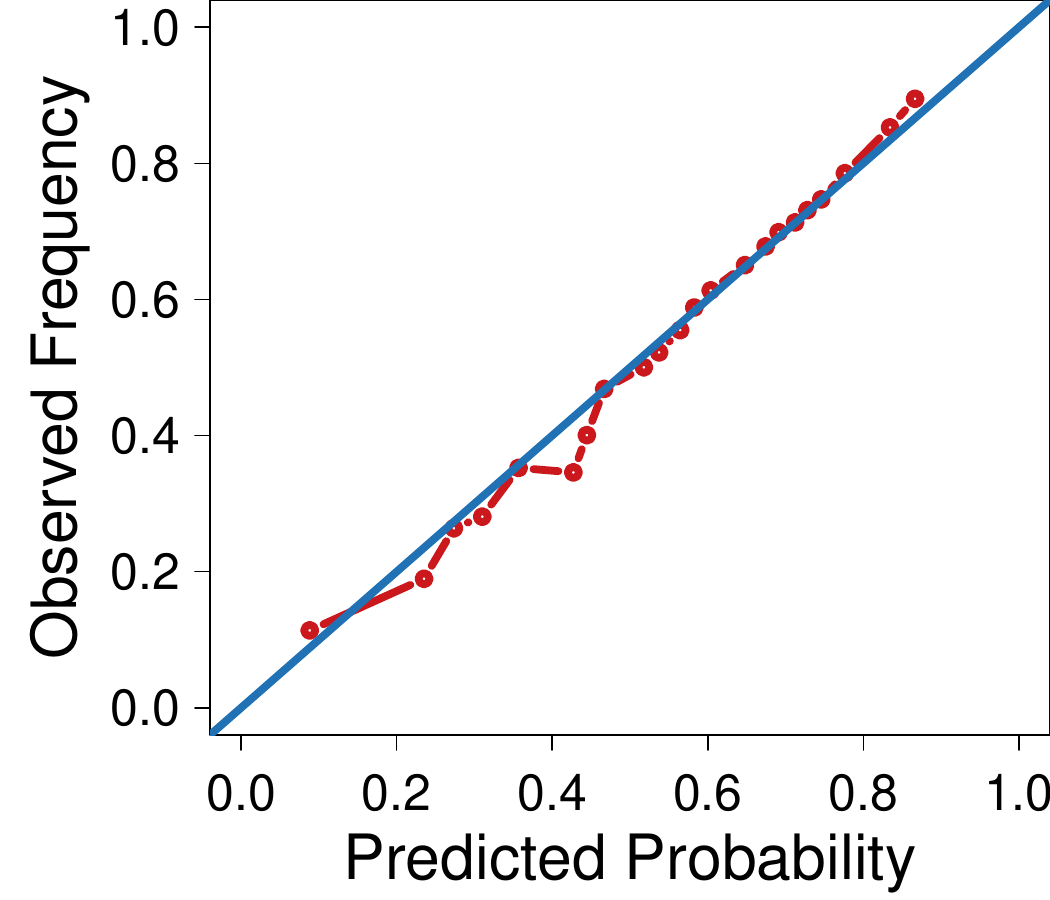}
\end{subfigure}\hfill
\begin{subfigure}{0.24\textwidth}
  \includegraphics[width=\linewidth]{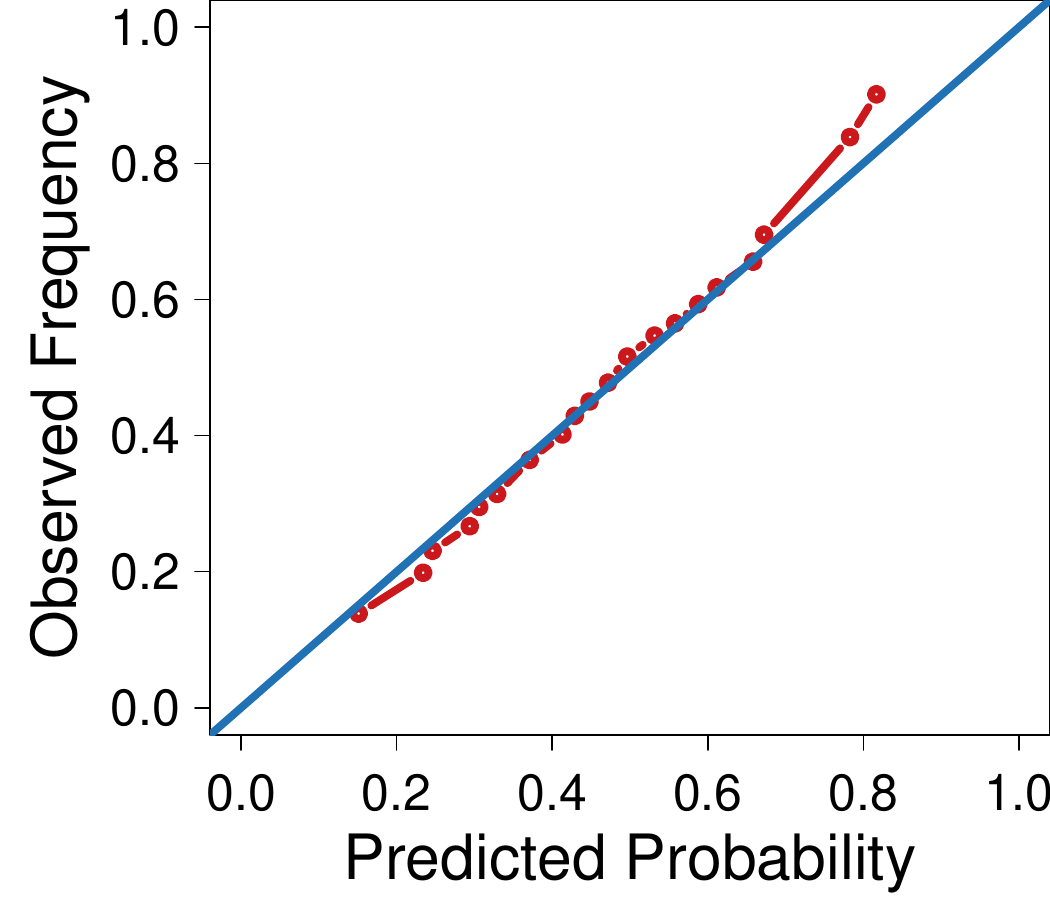}
\end{subfigure}\hfill
\begin{subfigure}{0.24\textwidth}
  \includegraphics[width=\linewidth]{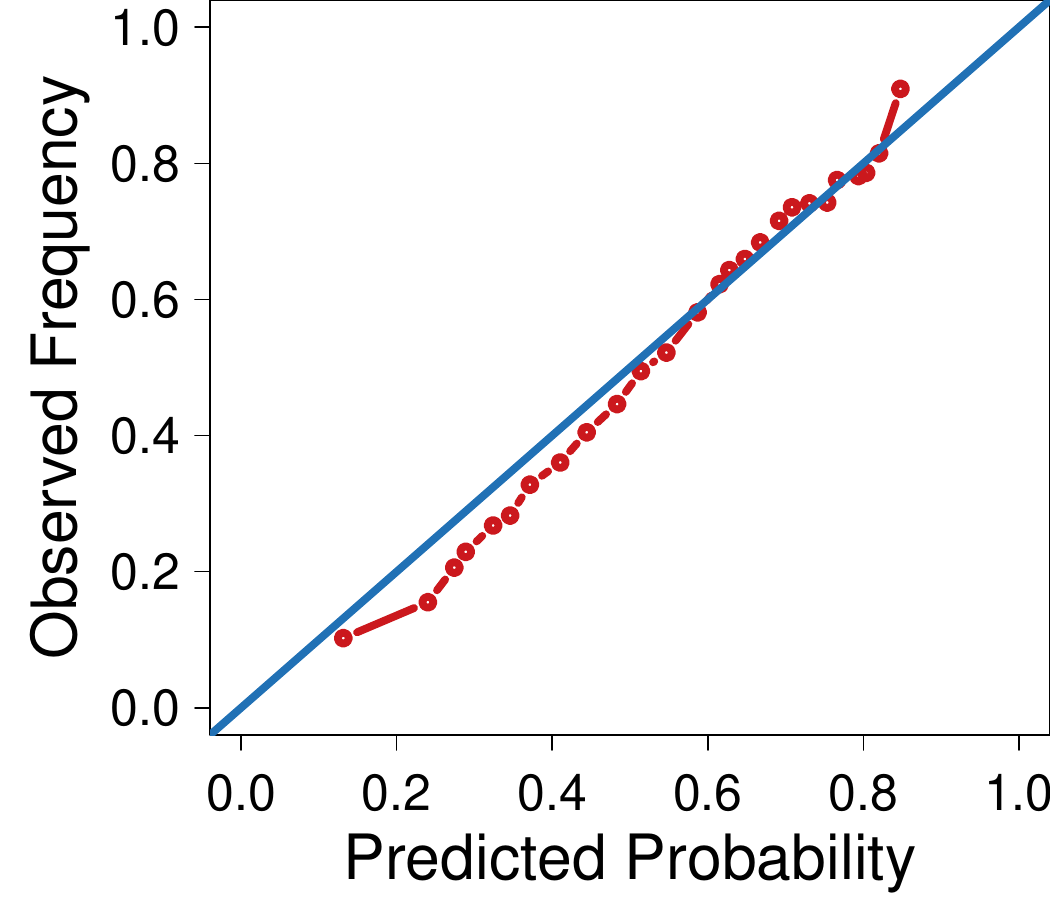}
\end{subfigure}

\vspace{0.2cm}

\begin{subfigure}{0.24\textwidth}
  \includegraphics[width=\linewidth]{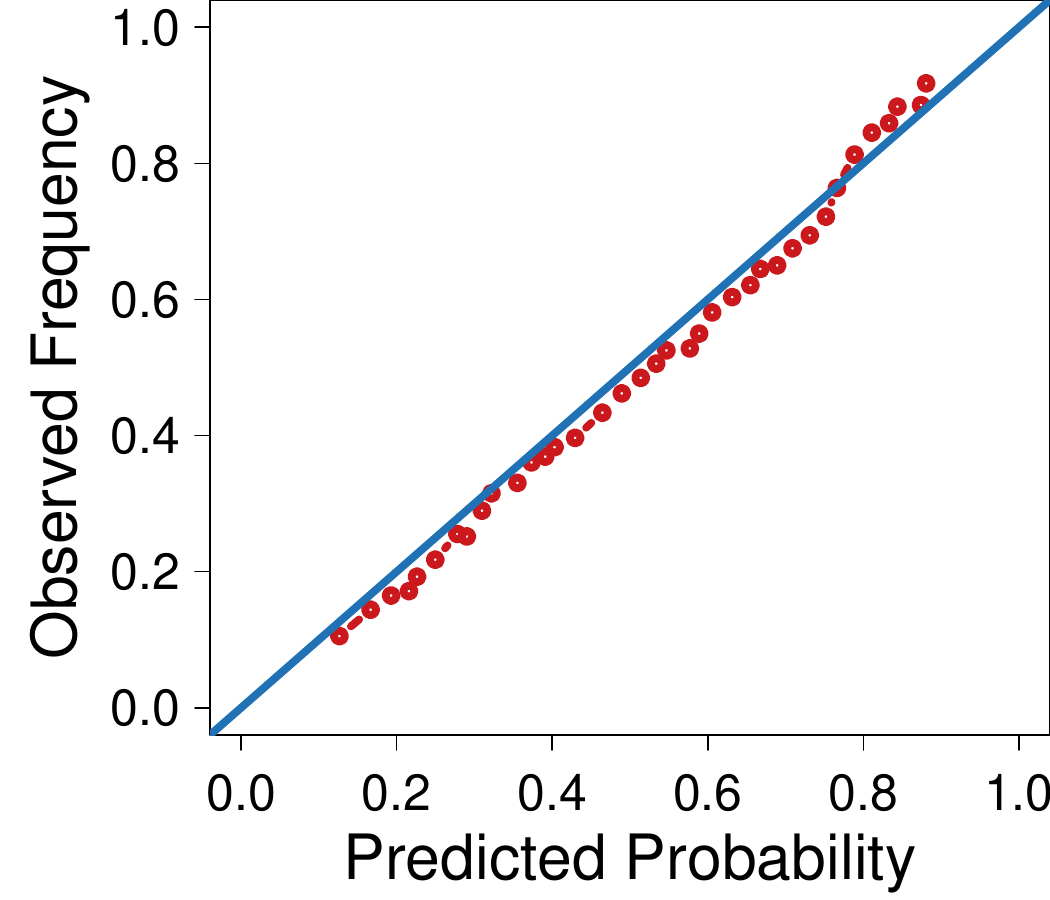}
  \caption*{\centering Model (a)}
\end{subfigure}\hfill
\begin{subfigure}{0.24\textwidth}
  \includegraphics[width=\linewidth]{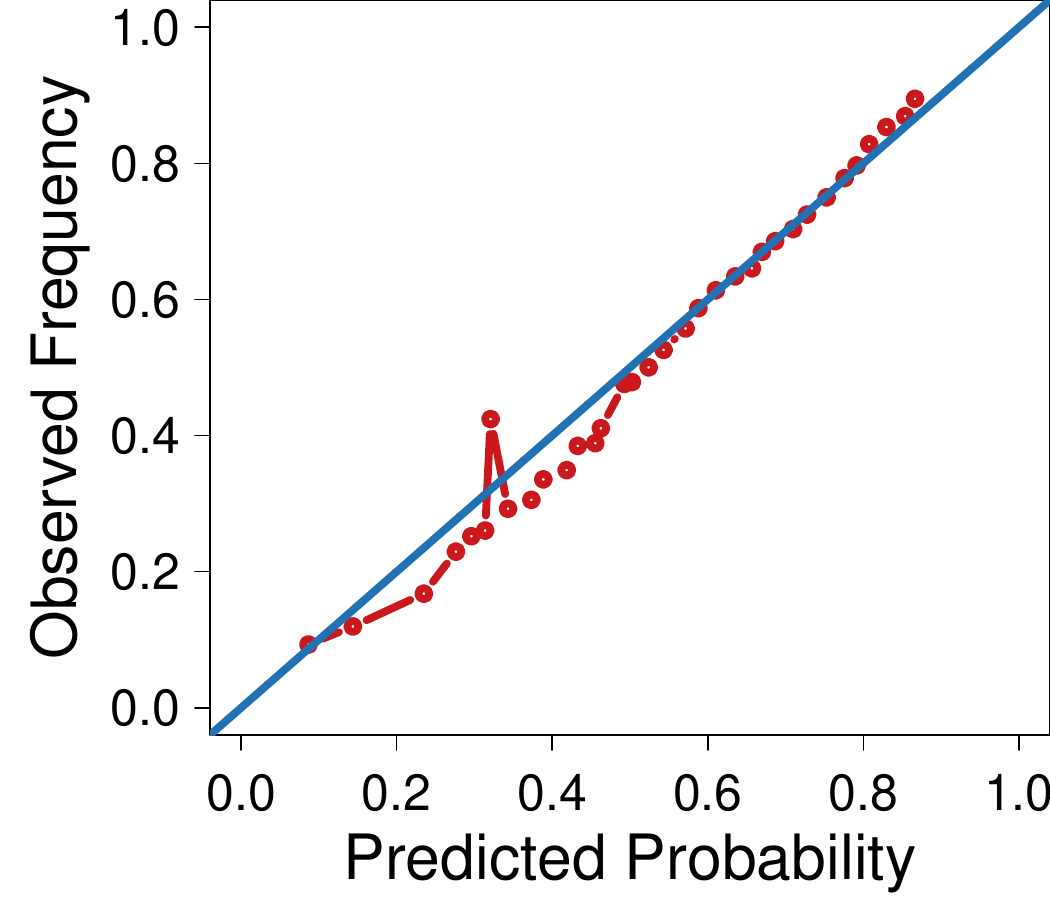}
  \caption*{\centering Model (b)}
\end{subfigure}\hfill
\begin{subfigure}{0.24\textwidth}
  \includegraphics[width=\linewidth]{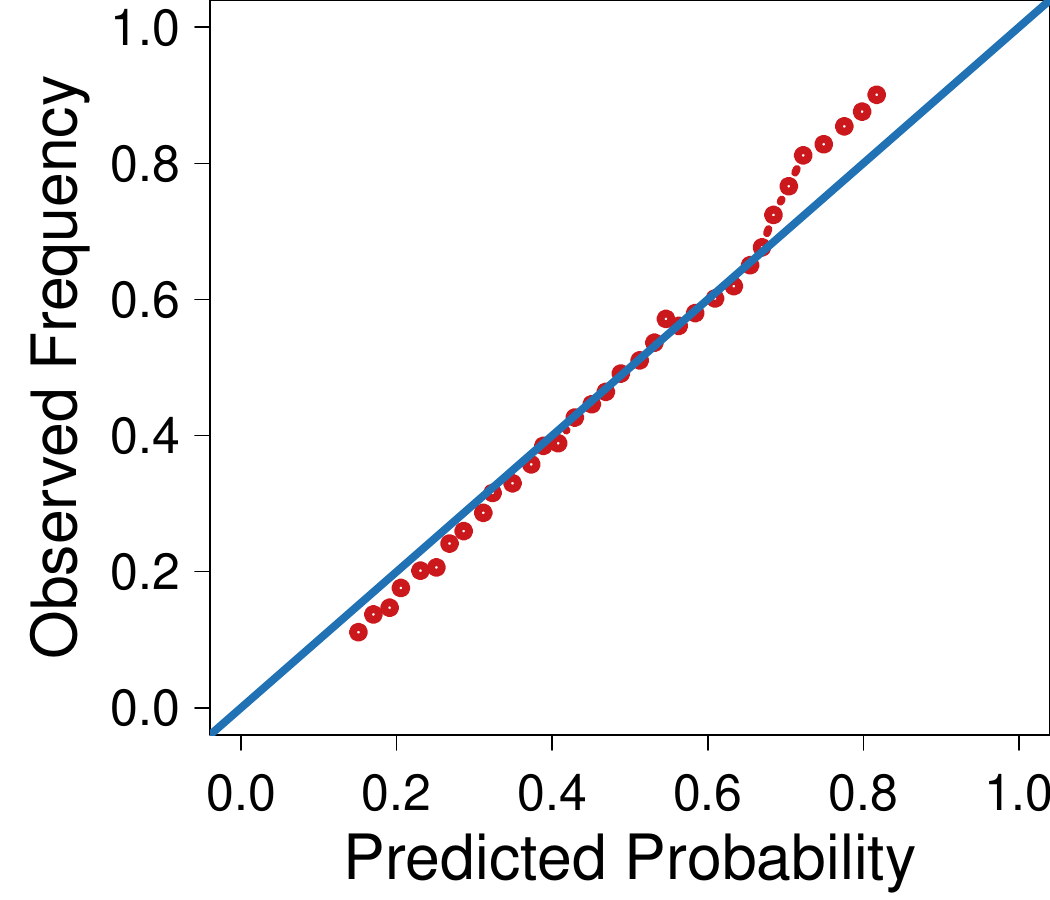}
  \caption*{\centering Model (c)}
\end{subfigure}\hfill
\begin{subfigure}{0.24\textwidth}
  \includegraphics[width=\linewidth]{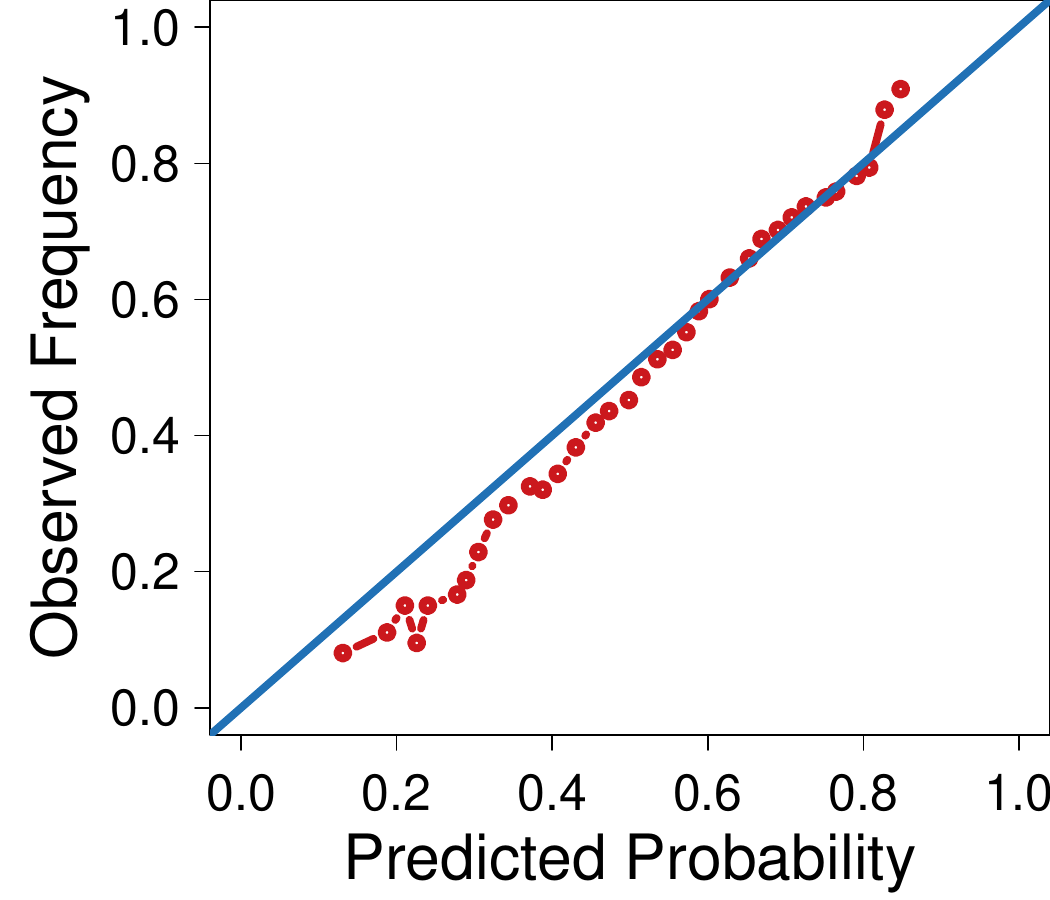}
  \caption*{\centering Model (d)}
\end{subfigure}

\vspace{0.2cm}

\begin{subfigure}{0.5\textwidth}
\centering
  \includegraphics[width=\linewidth]{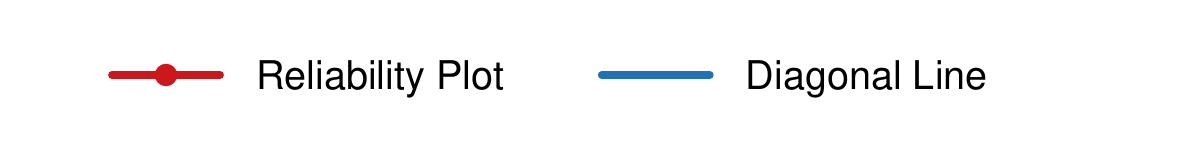}
\end{subfigure}\hfill

\caption{\centering Real data analysis results: top - constant intensity, bottom - rating-based intensity}
\label{fig:ece}

\end{figure}

\section{Conclusion}\label{sec-conc}

The proposed framework opens several directions for future research. A natural extension is to the non-stationary setting. In many real-world applications, user preferences evolve over time due to changing trends, external influences, or shifting individual interests, making it important to incorporate non-stationary dynamics into the model. Another promising direction is the inclusion of covariates, such as user-specific characteristics or contextual features associated with the available options. Incorporating such information can be expected to improve the flexibility and applicability of the framework in modeling complex decision-making behavior and will be explored in subsequent work.

\section{Proofs of the Theoretical Results}\label{sec6}
To prove our results, we first define the empirical measure $\mathbb{P}_N$, given by,
\[
    \mathbb{P}_N =  \frac{1}{N}\sum_{j=1}^N\delta_{\left\{\left(U^{(j)}_{t+1},R^{(j)}_{t}\right)_{t=1}^{T_j}\right\}},
\]
as well as the associated empirical processes
\begin{align*}
    \mathbb{F}_N(r) &= \mathbb{P}_N \sum_{t=1}^{T} \mathbf{1}\left\{R_t \leq r\right\}= \frac{1}{N}\sum_{j=1}^N\sum_{t=1}^{T_j}\mathbf{1}\left\{R^{(j)}_{t} \leq r\right\}, \\ \mathbb{G}_N(r) &= \mathbb{P}_N\sum_{t=1}^{T} U_{t+1} \mathbf{1}\left\{R_t \leq r\right\} =\frac{1}{N}\sum_{j=1}^N\sum_{t=1}^{T_j}U^{(j)}_{t+1}\mathbf{1}\left\{R^{(j)}_{t} \leq r\right\}, \quad r \in [0,1].
\end{align*}
The estimators $\hat{h}_N$, given by \eqref{eq:hath}, and $\hat{h}^{(0)}_N$, given by \eqref{eq:hath01} and \eqref{eq:hath02}, then can be written as, with the convention that $R_{(0)} = 0$,
\begin{align}\label{eq:hfg}
    \left\{\hat{h}_N\left(R_{(s)}\right)\right\}_{s = 1}^{S} &= \operatorname{slogcm}\left\{\mathbb{F}_N\left(R_{(s)}\right), \mathbb{G}_N\left(R_{(s)}\right)\right\}_{s = 0}^{S}, \nonumber\\
    \left\{\hat{h}^{(0)}_N\left(R_{(s)}\right)\right\}_{s = 1}^{s_0} &= h_0  \wedge \operatorname{slogcm}\left\{\mathbb{F}_N\left(R_{(s)}\right), \mathbb{G}_N\left(R_{(s)}\right)\right\}_{s = 0}^{s_0}, \\ \left\{\hat{h}^{(0)}_N\left(R_{(s)}\right)\right\}_{s = s_0+1}^{S} &= h_0 \vee \operatorname{slogcm}\left\{\mathbb{F}_N\left(R_{(s)}\right) - \mathbb{F}_N\left(R_{(s_0)}\right), \mathbb{G}_N\left(R_{(s)}\right) - \mathbb{G}_N\left(R_{(s_0)}\right)\right\}_{s = s_0}^{S}.\nonumber
\end{align}
Let $P$ be distribution of the random vector $\left(U_{t+1},R_{t}\right)_{t=1}^{T}$.
Note that 
\begin{align*}
\mathcal{F}(r)&=P\sum_{t=1}^{T} \mathbf{1}\left\{R_t \leq r\right\}
    = \mathbb{E}\!\left( \sum_{t=1}^{T} \mathbf{1}\{R_t \le r\} \right) \\
    &= \sum_{t_0=1}^{T_0} \mathbb{P}(T = t_0)\,
         \sum_{t=1}^{t_0} \mathbb{E}\!\left(\mathbf{1}\{R_t \le r\} \,\big|\, T = t_0 \right) \\
    &= \sum_{t_0=1}^{T_0} \mathbb{P}(T = t_0)
        \sum_{t=1}^{t_0} \mathbb{P}\!\left( R_t \le r \right) \\
\implies \mathcal{F}'(r)
    &= \sum_{t_0=1}^{T_0} \mathbb{P}(T = t_0)
      \sum_{t=1}^{t_0}f_t(r).
\end{align*}
\begin{proof}[Proof of Theorem~\ref{thm1}] 
For $a \in [0,1]$, let 
\begin{align}\label{eq:rN1}
   \mathcal{R}_N(a) = \sup\{r \in \left[0, 1\right] : \mathbb{G}_{N}(r)-a \mathbb{F}_{N}(r) = \min_{s \in \left[0, 1\right]} (\mathbb{G}_{N}(s)-a \mathbb{F}_{N}(s))\}.
\end{align}
By the change of variable $r = r_0 + \delta N^{-1/3}$, $\delta \in \mathbb{R}$, we get
\begin{align}\label{eq:rN2}
&\mathcal{R}_N(h(r_0) + N^{-1/3} x) - r_0 \nonumber
\\ &\qquad =
N^{-1/3} \underset{\delta}{\operatorname{arg}\operatorname{min}}
\left\{
\mathbb{G}_{N}\left(r_0 + \delta N^{-1/3}\right)
-
\left(h(r_0) + N^{-1/3} x\right)\, \mathbb{F}_{N}\left(r_0 + \delta N^{-1/3}\right)
\right\}.
\end{align}
Then using Lemma 3.2 of \cite{MR3445293} we get, 
\begin{align}\label{eq0}
\mathbb{P}&\left(N^{1/3}\left(\hat{h}_{N}\left(r_{0}\right)-h\left(r_{0}\right)\right) \leq x\right) \nonumber
\\ &=\mathbb{P}\left(\hat{h}_{N}\left(r_{0}\right)\leq h(r_0) + N^{-1/3} x\right) \nonumber
\\ &= \mathbb{P}\left(\mathcal{R}_N(h(r_0) + N^{-1/3} x) \geq r_0\right) \nonumber
\\ & =\mathbb{P}\left(\underset{\delta}{\operatorname{arg}\operatorname{min}} \left\{\mathbb{G}_{N}\left(r_{0}+N^{-1/3} 
\delta\right)-\left(h\left(r_{0}\right)+N^{-1/3}x\right)\mathbb{F}_{N}\left(r_{0}+N^{-1/3}\delta\right)\right\} \geq 0\right).
\end{align}
Observe that
\begin{align*}
& \underset{\delta}{\operatorname{arg}\operatorname{min}} \Bigg\{\mathbb{G}_{N}\left(r_{0}+N^{-1/3} 
\delta\right)-\left(h\left(r_{0}\right)+N^{-1/3}x\right)\mathbb{F}_{N}\left(r_{0}+N^{-1/3}\delta\right)\Bigg\}
\\ &= \underset{\delta}{\operatorname{arg}\operatorname{min}} \big\{\sqrt{N}(\mathbb{P}_N-P){\alpha}_{N,\delta} -  N^{-1/3} x\sqrt{N}(\mathbb{P}_N-P){\beta}_{N,\delta}\\
&\qquad \qquad \quad + \sqrt{N}P\alpha_{N,\delta}
- N^{1/6} xP{\beta}_{N,\delta}
\big\}, \quad \text{ where}\\
\alpha_{N,\delta}&(T, (U_{t+1}, R_t)_{t=1}^{T}) = N^{1/6}\sum_{t=1}^{T}(U_{t+1}-h(r_0))\left(\mathbf{1}\left\{R_{t}\le r_0+N^{-1/3}\delta\right\}-\mathbf{1}\left\{R_{t}\le r_0\right\}\right), \\
\beta_{N,\delta}&(T, (U_{t+1}, R_t)_{t=1}^{T}) = N^{1/6}\sum_{t=1}^{T}\left(\mathbf{1}\left\{R_{t}\le r_0+N^{-1/3}\delta\right\}-\mathbf{1}\left\{R_{t}\le r_0\right\}\right), 
\end{align*}
First, for \(\delta \in [0,K]\), $K > 0$ (the case $\delta \in [-K,0]$ follows similarly), the Mean Value Theorem implies that, uniformly in $\delta \in [0,K]$,
\begin{align}\label{eq:clt1}
\sqrt{N}P\alpha_{N,\delta} &= N^{2/3} P\left[\sum_{t=1}^T (U_{t+1}-h(r_0))\left(\mathbf{1}\left\{R_{t}\le r_0+N^{-1/3}\delta\right\}
- \mathbf{1}\left\{R_{t}\le r_0\right\}\right)\right] \nonumber
\\ &= N^{2/3} \sum_{t_0=1}^{T_0} \mathbb{P}(T=t_0) \sum_{t=1}^{t_0}\int_{r_0}^{r_0+N^{-1/3}\delta}(h(r)-h(r_0))f_t(r)dr \nonumber
\\ &= N^{2/3} \sum_{t_0=1}^{T_0} \mathbb{P}(T=t_0) \sum_{t=1}^{t_0}\int_{r_0}^{r_0+N^{-1/3}\delta}h'(\xi(r))(r-r_0)f_t(r)dr \nonumber
\\ &\rightarrow\frac{1}{2}h'(r_0)\delta^2\sum_{t_0=1}^{T_0} \mathbb{P}(T=t_0) \sum_{t=1}^{t_0}f_t(r_0), \quad N \to \infty, \nonumber
\\ &= \frac{1}{2} h^{\prime}\left(r_{0}\right) \mathcal{F}^{\prime}\left(r_{0}\right) \delta^{2}.
\end{align}
Similarly, uniformly in $\delta \in [-K,K]$, $K > 0$, 
\begin{align}\label{eq:clt2}
N^{1/6} xP{\beta}_{N,\delta} \to \mathcal{F}^{\prime}\left(r_{0}\right)x\delta, \quad N \to \infty.
\end{align}
Next, a routine but tedious verification (see, for example, the proof of Theorem 4.3 of \cite{WellnerZhang1998}) shows that the function class  $\mathcal{A}_{N,K} := \{\alpha_{N,\delta} : \delta \in[-K,K]\}$, $K > 0$ satisfies the assumptions of Theorem 2.11.23 of \cite{MR1385671}. In particular, the pointwise limit of the covariance \(P \alpha_{N,\delta_1} \alpha_{N,\delta_2} - P \alpha_{N,\delta_1} P \alpha_{N,\delta_2}\) can be obtained as follows. First, from \eqref{eq:clt1}, for any \(\delta \in [-K,K]\), 
$P \alpha_{N,\delta} \to 0$. Second, for any \(\delta_1,\delta_2 \in [0,K]\), 
\begin{align*}
\alpha_{N,\delta_1}\alpha_{N,\delta_2}
&= N^{1/3}\Bigg[\sum_{t=1}^{T}(U_{t+1}-h(r_0))^2\mathbf{1}\left\{r_0<R_{t}\le r_0+N^{-1/3}(\delta_1\wedge \delta_2)\right\}
\\ &\qquad \qquad + \sum_{t=1}^{T}\sum_{s\ne t}(U_{t+1}-h(r_0))(U_{s+1}-h(r_0))\\ &\hspace{3cm}\times\mathbf{1}\left\{r_0<R_{t}\le r_0+N^{-1/3}\delta_1\right\}
\mathbf{1}\left\{r_0<R_{s}\le r_0+N^{-1/3}\delta_2\right\}\Bigg].
\end{align*}
It follows that
\begin{align*}
&N^{1/3}\Bigg|P\Bigg[\sum_{t=1}^{T}\sum_{s\ne t}(U_{t+1}-h(r_0))(U_{s+1}-h(r_0))\\ &\hspace{3cm}\times\mathbf{1}\left\{r_0<R_{t}\le r_0+N^{-1/3}\delta_1\right\}
\mathbf{1}\left\{r_0<R_{s}\le r_0+N^{-1/3}\delta_2\right\}\Bigg]\Bigg|
\\ &\le C_1 N^{1/3}\sum_{t_0=1}^{T_0}\mathbb{P}(T=t_0)
\sum_{t=1}^{t_0}\sum_{s\neq t}\mathbb{P}\left(r_0<R_{t}\le r_0+N^{-1/3}\delta_1, r_0<R_{s}\le r_0+N^{-1/3}\delta_2\right)
\\ &\le C_2 N^{-1/3}\rightarrow 0, \hspace{7cm} \text{ and }
\\[.3cm]
&N^{1/3}P\left[\sum_{t=1}^{T}(U_{t+1}-h(r_0))^2\mathbf{1}\left\{r_0<R_{t}\le r_0+N^{-1/3}(\delta_1\wedge \delta_2)\right\}\right]
\\ &= N^{1/3}\mathbb{E}\left[\sum_{t=1}^{T}\left(h(R_t)(1-h(R_t))+(h(R_t)-h(r_0))^2\right)\mathbf{1}\left\{r_0<R_{t}\le r_0+N^{-1/3}(\delta_1\wedge \delta_2)\right\}\right]
\\ &= N^{1/3}\left[\sum_{t_0=1}^{T_0}\mathbb{P}(T=t_0)\sum_{t=1}^{t_0}\int_{r_0}^{r_0+N^{-1/3}(\delta_1\wedge \delta_2)}h(r)(1-h(r))f_t(r)dr\right.
\\ &\qquad \qquad\left.+\sum_{t_0=1}^{T_0}\mathbb{P}(T=t_0)\sum_{t=1}^{t_0}\int_{r_0}^{r_0+N^{-1/3}(\delta_1\wedge \delta_2)}(h(r)-h(r_0))^2f_t(r)dr\right]
\\ &\rightarrow(\delta_1\wedge \delta_2)h(r_0)(1-h(r_0))\sum_{t_0=1}^{T_0}\mathbb{P}(T=t_0)\sum_{t=1}^{t_0}f_t(r_0)=(\delta_1\wedge \delta_2)h(r_0)(1-h(r_0))\mathcal{F}'(r_0).
\end{align*}
Hence for any \(\delta_1,\delta_2 \in [0,K]\),
\[
P \alpha_{N,\delta_1} \alpha_{N,\delta_2} - P \alpha_{N,\delta_1} P \alpha_{N,\delta_2}
\rightarrow
(\delta_1\wedge \delta_2)h(r_0)(1-h(r_0))\mathcal{F}'(r_0).
\]
Similarly, for any \(\delta_1,\delta_2 \in [-K,0]\),
\[
P \alpha_{N,\delta_1} \alpha_{N,\delta_2} - P \alpha_{N,\delta_1} P \alpha_{N,\delta_2}
\rightarrow
-(\delta_1\vee \delta_2)h(r_0)(1-h(r_0))\mathcal{F}'(r_0),
\]
and for any \(\delta_1,\delta_2 \in [-K,K]\) with $\delta_1\times\delta_2 < 0$, as $\alpha_{N,\delta_1}\alpha_{N,\delta_2} \equiv 0$, 
\[
P \alpha_{N,\delta_1} \alpha_{N,\delta_2} - P \alpha_{N,\delta_1} P \alpha_{N,\delta_2}
\rightarrow 0.
\]
Consequently, Theorem 2.11.23 of \cite{MR1385671} implies that
\begin{align}\label{eq:clt3}
\sqrt{N}(\mathbb{P}_N-P){\alpha}_{N,\delta} \xrightarrow{d} \sqrt{h(r_0)(1-h(r_0))\mathcal{F}^{\prime}\left(r_{0}\right)} \mathbb{B}(\delta)
\quad \text{ in } \quad \mathcal{L}_\infty. 
\end{align}
By a similar argument,
\begin{align}\label{eq:clt4}
    N^{-1/3}\sqrt{N}(\mathbb{P}_N-P){\beta}_{N,\delta} \xrightarrow{d} 0
\quad \text{ in } \quad \mathcal{L}_\infty. 
\end{align}
Combining \eqref{eq:clt1} - \eqref{eq:clt4} and Slutsky's Theorem, we get
\begin{align}\label{eq:cltm}
& N^{2/3}\left[\mathbb{G}_{N}\left(r_{0}+N^{-1/3} \delta\right)-\left(h\left(r_{0}\right)+N^{-1 / 3} x\right) \mathbb{F}_{N}\left(r_{0}+N^{-1/3}\delta\right)\right] \nonumber\\
& \quad \stackrel{d}{\rightarrow} \sqrt{h(r_0)(1-h(r_0))\mathcal{F}^{\prime}\left(r_{0}\right)} \mathbb{B}(\delta)+\frac{1}{2} h^{\prime}\left(r_{0}\right) \mathcal{F}^{\prime}\left(r_{0}\right) \delta^{2}-\mathcal{F}^{\prime}\left(r_{0}\right)x\delta \quad \text { in } \quad \mathcal{L}_\infty.
\end{align}
Using Theorems 2.14.2 and 3.2.5 of \cite{MR1385671}, it can be verified that (see, for example, the proof of Theorem 4.3 of \cite{WellnerZhang1998}),
\[\underset{\delta}{\operatorname{arg}\operatorname{min}} \Bigg\{\mathbb{G}_{N}\left(r_{0}+N^{-1/3} 
\delta\right)-\left(h\left(r_{0}\right)+N^{-1/3}x\right)\mathbb{F}_{N}\left(r_{0}+N^{-1/3}\delta\right)\Bigg\} = O_p(1).\]
Then Theorem 3.2.2 of \cite{MR1385671} and \eqref{eq:cltm} imply that
\begin{align}\label{eq:cltm1}
&\underset{\delta}{\operatorname{arg}\operatorname{min}} \Bigg\{\mathbb{G}_{N}\left(r_{0}+N^{-1/3} 
\delta\right)-\left(h\left(r_{0}\right)+N^{-1/3}x\right)\mathbb{F}_{N}\left(r_{0}+N^{-1/3}\delta\right)\Bigg\} \nonumber
\\ &\stackrel{d}{\to} \underset{\delta}{\arg \min }\left\{\sqrt{h(r_0)(1-h(r_0))\mathcal{F}^{\prime}\left(r_{0}\right)} \mathbb{B}(\delta)+\frac{1}{2} h^{\prime}\left(r_{0}\right) \mathcal{F}^{\prime}\left(r_{0}\right) \delta^{2}-\mathcal{F}^{\prime}\left(r_{0}\right)x\delta\right\}. 
\end{align}
An application of the scaling property of Brownian motion yields (see, for example, Exercise 3.27 of \cite{MR3445293}), 
\begin{align}\label{eq:cltm2}
&\underset{\delta}{\arg \min }\left\{\sqrt{h(r_0)(1-h(r_0))\mathcal{F}^{\prime}\left(r_{0}\right)} \mathbb{B}(\delta)+\frac{1}{2} h^{\prime}\left(r_{0}\right) \mathcal{F}^{\prime}\left(r_{0}\right) \delta^{2}-\mathcal{F}^{\prime}\left(r_{0}\right)x\delta\right\} \nonumber
\\ & \quad\stackrel{d}{=}\left[\frac{4 h(r_0)(1-h(r_0))}{\left(h^{\prime }\left(r_{0}\right)\right)^2 \mathcal{F}^{\prime}\left(r_{0}\right)}\right]^{1 / 3} \underset{\delta}{\arg \min }\left\{\mathbb{B}(\delta)+\delta^{2}\right\}+\frac{x}{h^{\prime}\left(r_{0}\right)}.
\end{align}
Combining \eqref{eq0}, \eqref{eq:cltm1} and \eqref{eq:cltm2}, we get
$$
\begin{aligned}
& \mathbb{P}\left(N^{1 / 3}\left(\hat{h}_{N}\left(r_{0}\right)-h\left(r_{0}\right)\right) \leq x\right) \\
& \quad \rightarrow \mathbb{P}\left(\left[\frac{4 h(r_0)(1-h(r_0))}{\left(h^{\prime }\left(r_{0}\right)\right)^2 \mathcal{F}^{\prime}\left(r_{0}\right)}\right]^{1 / 3} \underset{\delta}{\arg \min }\left\{\mathbb{B}(\delta)+\delta^{2}\right\} \geq-\frac{x}{h^{\prime}\left(r_{0}\right)}\right) \\
& \quad=\mathbb{P}\left(\left[\frac{4h(r_0)(1-h(r_0)) h^{\prime}\left(r_{0}\right)}{\mathcal{F}^{\prime}\left(r_{0}\right)}\right]^{1 / 3} \underset{\delta}{\arg \max }\left\{\mathbb{B}(\delta)-\delta^{2}\right\} \leq x\right),
\end{aligned}
$$
completing the proof.
\end{proof}

\begin{proof}[Proof of Theorem~\ref{thm2}] 
Fix \((\delta_{1},\ldots,\delta_{k}) \in \mathbb{R}^k\). 
It is enough to show that 
\begin{align}\label{eq:fdd}
\{\Omega_{N}(\delta_{i}), \Omega^{(0)}_{N}(\delta_{i})\}_{i=1}^{k} &\;\;\stackrel{d}{\to}\;\; 
\{ g_{\alpha, \beta}(\delta_i), g^{(0)}_{\alpha, \beta}(\delta_i)\}_{i=1}^{k}.
\end{align}
This is because the processes $\Omega_{N}$, $\Omega^{(0)}_{N}$ are monotone, and so \eqref{eq:fdd} and Corollary 2 of \cite{MR1311975} imply 
the convergence in \eqref{eq:fdd} also hold in the space \(\mathcal{L}_{2} \times \mathcal{L}_{2}\). The proof of \eqref{eq:fdd} follows along the same lines as that of Theorem~\ref{thm1}; we outline the main steps below.

Define the processes \(\mathbb{M}_{N}\) and $\mathbb{U}_{N}$ as
\begin{align*}
\mathbb{M}_{N}(\delta) &= N^{2/3} \, \mathcal{F}'(r_{0})^{-1} 
\Big( \mathbb{G}_{N}(r_{0} + N^{-1/3}\delta) - \mathbb{G}_{N}(r_{0}) 
- h(r_0) \big( \mathbb{F}_N(r_{0} + N^{-1/3}\delta) - \mathbb{F}_N(r_{0}) \big) \Big), \\
\mathbb{U}_{N}(\delta) &= N^{1/3} \, \mathcal{F}'(r_{0})^{-1} 
\big( \mathbb{F}_N(r_{0} + N^{-1/3}\delta) - \mathbb{F}_N(r_{0}) \big).
\end{align*}
For $a \in [0,1]$ and $A \in \mathbb{R}$, let 
\begin{align*}
   \mathcal{R}^{+}_N(a) &= \inf\{r \in \left[r_0, 1\right] : \mathbb{G}_{N}(r)-a \mathbb{F}_{N}(r) = \min_{s \in \left[r_0, 1\right]} (\mathbb{G}_{N}(s)-a \mathbb{F}_{N}(s))\},\\
   \mathcal{R}^{-}_N(a) &= \sup\{r \in \left[0, r_0\right] : \mathbb{G}_{N}(r)-a \mathbb{F}_{N}(r) = \min_{s \in \left[0, r_0\right]} (\mathbb{G}_{N}(s)-a \mathbb{F}_{N}(s))\},\\
   \widetilde{\mathcal{R}}_N(A) &= \sup\{\delta \in \mathbb{R} : \mathbb{M}_{N}(\delta)-A\mathbb{U}_{N}(\delta) = \min_{h \in \mathbb{R}} (\mathbb{M}_{N}(h)-A \mathbb{U}_{N}(h))\},
   \\ \widetilde{\mathcal{R}}^{+}_N(A) &= \inf\{\delta \in \mathbb{R}^{+} : \mathbb{M}_{N}(\delta)-A\mathbb{U}_{N}(\delta) = \min_{h \in \mathbb{R}^{+}} (\mathbb{M}_{N}(h)-A \mathbb{U}_{N}(h))\},
   \\ \widetilde{\mathcal{R}}^{-}_N(A) &= \sup\{\delta \in \mathbb{R}^{-} : \mathbb{M}_{N}(\delta)-A\mathbb{U}_{N}(\delta) = \min_{h \in \mathbb{R}^{-}} (\mathbb{M}_{N}(h)-A \mathbb{U}_{N}(h))\}.
\end{align*}
From \eqref{eq:rN1} and \eqref{eq:rN2}, centering the
processes $\mathbb{F}_N$ (respectively, $\mathbb{G}_N$) around $\mathbb{F}_N(r_0)$ (respectively, $\mathbb{G}_N(r_0)$), multiplying them a factor of $N^{2/3}$, and applying the change of variable $r = r_0 + \delta N^{-1/3}$, $\delta \in \mathbb{R}$, we get
\begin{align*}
\mathcal{R}_N(h(r_0) + N^{-1/3} A) = r_0 + N^{-1/3}\widetilde{\mathcal{R}}_N(A), \\
\mathcal{R}^{\pm}_N(h(r_0) + N^{-1/3} A) = r_0 + N^{-1/3}\widetilde{\mathcal{R}}^{\pm}_N(A).
\end{align*}
Note that, as shown in \eqref{eq0}, for $\delta, x \in \mathbb{R}$, 
\begin{align}\label{eq:rN4}
    \Omega_{N}(\delta) \leq x \iff \mathcal{R}_N\left(h(r_0) + N^{-1/3}x\right) \geq r_0 + N^{-1/3}\delta \iff \widetilde{\mathcal{R}}_N(x)\geq \delta.
\end{align}
Similarly, for $\delta, x \in \mathbb{R}^{\pm}$,
\begin{align}\label{eq:rN5}
    \Omega_{N}^{(0)}(\delta) \gtrless x \iff \mathcal{R}^{\pm}_N\left(h(r_0) + N^{-1/3}x\right) \lessgtr r_0 + N^{-1/3}\delta \iff \widetilde{\mathcal{R}}^{\pm}_N(x)\lessgtr \delta.
\end{align}
It follows, from \eqref{eq:bm}, \eqref{eq:clt2}, \eqref{eq:cltm} and Slutsky's Theorem,  that, 
\begin{align}\label{eq:c1}
    \left(\mathbb{M}_{N}(\delta), \mathbb{U}_{N}(\delta)\right) \stackrel{d}{\to} (\widetilde{\mathbb{B}}_{\alpha, \beta}(\delta), \delta) \quad \text { in } \quad \mathcal{L}_\infty \times \mathcal{L}_\infty.
\end{align}  
From Lemma 3.2 of \cite{MR3445293}, it also follows that, for $\delta, x \in \mathbb{R}$, almost surely
\begin{align}\label{eq:c01}
   \underset{h}{\operatorname{arg}\operatorname{min}} \left(\widetilde{\mathbb{B}}_{\alpha, \beta}(h) - xh\right) \geq \delta \iff g_{\alpha, \beta}(\delta) \leq x, 
\end{align}
and for $\delta, x \in \mathbb{R}^{\pm}$, almost surely
\begin{align}\label{eq:c02}
   \underset{h \in \mathbb{R}^{\pm}}{\operatorname{arg}\operatorname{min}} \left(\widetilde{\mathbb{B}}_{\alpha, \beta}(h) - xh\right) \geq \delta \iff g^{\pm}_{\alpha, \beta}(\delta) \leq x, 
\end{align}
Then Theorem 3.2.2 of \cite{MR1385671} (as argued in the proof of Theorem~\ref{thm:lrt}), along with \eqref{eq:rN4} - \eqref{eq:c02}, imply \eqref{eq:fdd}.
\end{proof}

The following Proposition~\ref{p1} and Proposition~\ref{lem:dist} are standard results (see, for example, \cite{banerjee2001likelihood}) which we need for proving Theorem~\ref{thm:lrt}. Denote by 
$E_{a,b}$ the set on which the two processes \(g_{a,b}\) and \(g^{0}_{a,b}\) differ. 

\begin{proposition}\label{p1}
Fix $a,b > 0$. For any $\epsilon > 0$, there exists $L_\epsilon  = L_\epsilon(a,b) > 0$ such that $\mathbb{P}\left(E_{a,b} \subseteq [-L_\epsilon,L_\epsilon]\right) \geq 1-\epsilon$.
\end{proposition}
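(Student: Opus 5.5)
The plan is to use the fact that $g_{a,b}$ and $g^{(0)}_{a,b}$ can differ only on a region bounded by touch points of the greatest convex minorants that straddle $0$. I would then show these touch points are tight random variables.

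Since $\widetilde{\mathbb{B}}_{a,b}(x)=a\mathbb{B}(x)+bx^2$ with $b>0$, the quadratic drift dominates the Brownian fluctuations. Hence the greatest convex minorant $\mathbb{G}_{a,b}$ touches $\widetilde{\mathbb{B}}_{a,b}$ at points accumulating at $\pm\infty$. For $L>0$, let $\tau^+_L$ be the first point of contact of $\mathbb{G}_{a,b}$ with $\widetilde{\mathbb{B}}_{a,b}$ in $[L,\infty)$, and $\tau^-_L$ the last contact point in $(-\infty,-L]$.

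The first step is to show that the two-sided minorant $\mathbb{G}_{a,b}$ and the one-sided minorant $\mathbb{G}^+_{a,b}$ (built on $x>0$) coincide to the right of a contact point common to both. Likewise $\mathbb{G}_{a,b}$ and $\mathbb{G}^-_{a,b}$ coincide to the left of a common contact point.

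The second step is to handle the truncations $\wedge 0$ and $\vee 0$. The slopes $g_{a,b}(x)$ tend to $+\infty$ as $x\to\infty$ and to $-\infty$ as $x\to-\infty$, because the drift makes $g_{a,b}(x)\approx 2bx$. So beyond some random point the truncation is inactive. Combining these two facts, $g_{a,b}=g^{(0)}_{a,b}$ outside $[D^-,D^+]$, where:
\begin{itemize}
\item $D^+$ is the first common contact point of $\mathbb{G}_{a,b}$ and $\mathbb{G}^+_{a,b}$ to the right of $0$ at which $g_{a,b}>0$;
\item $D^-$ is defined symmetrically to the left of $0$ (with $g_{a,b}<0$).
\end{itemize}
The proposition then reduces to tightness of $D^\pm$: it suffices to show $\mathbb{P}(D^+>L)\to0$ as $L\to\infty$, and similarly for $D^-$.

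The main obstacle is producing, with high probability, a common contact point of the two-sided and one-sided minorants in a bounded window. I would argue by localization.

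First, take the minimizer locations $\arg\min_{x}\{\widetilde{\mathbb{B}}_{a,b}(x)-cx\}$ with slope $c=2bL$. By Brownian scaling and standard tail bounds for $\sup_x\{a\mathbb{B}(x)-bx^2/2\}$, these locations equal $L+O_p(1)$, uniformly in the sense that the $O_p(1)$ term does not depend on $L$ (compare the argument via Lemma 3.2 of \cite{MR3445293} used for \eqref{eq:c01}).

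Second, consider the minimizers over $x\ge 0$ only, at the same slope. Such a minimizer lies at a contact point of $\mathbb{G}^+_{a,b}$. It coincides with the two-sided minimizer as soon as $\inf_{x<0}\{\widetilde{\mathbb{B}}_{a,b}(x)-cx\}$ exceeds the value attained near $x=L$. That value is about $-bL^2$. The infimum over $x<0$ is at least $\inf_{x<0}a\mathbb{B}(x)$, because on $x<0$ both $bx^2\ge0$ and $-cx\ge0$ when $c>0$. Since $\inf_{x<0}\{a\mathbb{B}(x)+bx^2\}$ is an a.s.\ finite random variable, this inequality holds with probability tending to one.

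So with probability at least $1-\epsilon$ there is a common contact point in $[L/2,2L]$ whose slope is approximately $2bL>0$, which gives $D^+\le 2L$ for $L$ large. Applying the same argument to the left and choosing $L_\epsilon=2L$ completes the proof. The constants depend on $a,b$ only through the scaling.
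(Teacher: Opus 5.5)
The paper does not prove Proposition~\ref{p1}; it quotes it as a standard fact with a pointer to \cite{banerjee2001likelihood}. Your argument is a correct, self-contained proof along the usual lines. The set $E_{a,b}$ is trapped between contact points of $\mathbb{G}_{a,b}$ carrying supporting lines of slopes $\pm c$. Beyond such a point, the one-sided and two-sided minorants agree (your first step, which follows by gluing and maximality of the minorant), and the truncation at $0$ is inactive.

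The tightness you need actually holds exactly, not just asymptotically. Since $\widetilde{\mathbb{B}}_{a,b}(x)-cx=a\mathbb{B}(x)+b(x-c/(2b))^2-c^2/(4b)$, stationarity of Brownian increments shows that, for every $c$, $\sup\arg\min_x\{\widetilde{\mathbb{B}}_{a,b}(x)-cx\}-c/(2b)$ has the law of $Z:=\arg\min_y\{a\mathbb{B}(y)+by^2\}$. It is this stationarity, rather than Brownian scaling, that makes your $O_p(1)$ term free of $L$. Compared with citing the literature, this also gives an explicit constant: if $\mathbb{P}(|Z|>K)<\epsilon/2$, then $c=2b(K+1)$ yields $L_\epsilon=2K+1$.

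Three small points should be tightened.

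First, what you call the main obstacle is automatic. If $\tau>0$ is a contact point of $\mathbb{G}_{a,b}$, then $\mathbb{G}_{a,b}\le\mathbb{G}^+_{a,b}\le\widetilde{\mathbb{B}}_{a,b}$ on $(0,\infty)$ forces $\tau$ to be a contact point of $\mathbb{G}^+_{a,b}$ as well. So once the two-sided minimizer is known to be positive, there is no need to compare one-sided and two-sided minimizers.

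Second, if you keep that comparison, note that $\inf_{x<0}a\mathbb{B}(x)=-\infty$ almost surely. On $x<0$ you may drop only the term $-cx\ge 0$ and must retain $bx^2$, as your next sentence in fact does.

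Third, replace ``slope approximately $2bL$'' by the exact fact that at any minimizer $\tau_c$ of $\widetilde{\mathbb{B}}_{a,b}(x)-cx$ the right derivative satisfies $g_{a,b}(\tau_c)\ge c>0$. On the left side, the corresponding fact is that the left derivative at $\tau_{-c}$ is at most $-c<0$, so $g_{a,b}<0$ strictly to the left of $\tau_{-c}$. This is all the truncation step requires.
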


By an application of Brownian scaling, the processes 
\(g_{a,b}\) and \(g^{0}_{a,b}\) can be shown to be related in distribution to what are called the canonical processes \(g_{1,1}\) and \(g^{0}_{1,1}\).

\begin{proposition}\label{lem:dist}
For any \(a,b > 0\), we have, in the space $\mathcal{L}_2 \times \mathcal{L}_2$,
\begin{align*}
&\bigl(g_{a,b}(x),\, g^0_{a,b}(x),\, E_{a,b}\bigr) \stackrel{d}{=} a(b/a)^{1/3}
\Bigl(g_{1,1}\bigl((b/a)^{2/3} x\bigr),\,g^0_{1,1}\bigl((b/a)^{2/3} x\bigr),\,(a/b)^{2/3} E_{1,1}\Bigr),\\
&\hspace{3.5cm}\frac{1}{a^2} \int \Big\{ (g_{a,b}(x))^2 - (g^{0}_{a,b}(x))^2 \Big\}\, dx 
\;\;\stackrel{d}{=}\;\; \mathbb{D}.
\end{align*}
\end{proposition}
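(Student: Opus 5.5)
The plan is to reduce everything to a pathwise identity obtained from Brownian scaling. Once the process $\widetilde{\mathbb{B}}_{a,b}$ is written as a deterministic rescaling of a copy of $\widetilde{\mathbb{B}}_{1,1}$, every object in the statement is the same deterministic functional of the underlying path. These objects are $g_{a,b}$, $g^{0}_{a,b}$, $E_{a,b}$ and the integral. So equality in law of the path transfers to joint equality in law of all of them.

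Set $k = (a/b)^{2/3} > 0$ and $c = a k^{1/2} = a(a/b)^{1/3}$. The process $\mathbb{B}^{*}(y) := k^{-1/2}\mathbb{B}(ky)$ is again a two-sided standard Brownian motion with $\mathbb{B}^{*}(0)=0$. Using $a k^{1/2} = b k^{2}$, which is the defining property of $k$, we get
\begin{align*}
\widetilde{\mathbb{B}}_{a,b}(ky) = a k^{1/2}\mathbb{B}^{*}(y) + b k^{2} y^{2} = c\,\bigl(\mathbb{B}^{*}(y) + y^{2}\bigr) = c\,\widetilde{\mathbb{B}}^{*}_{1,1}(y).
\end{align*}
Next I would check three deterministic facts. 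First, taking the greatest convex minorant commutes with multiplication by a positive constant and with the linear change of variable $x = ky$, $k>0$. Second, since $k>0$ maps $\{y \le 0\}$ onto $\{x \le 0\}$ and $\{y>0\}$ onto $\{x>0\}$, the same holds for the one-sided minorants $\mathbb{G}^{\pm}$. Third, truncation by $\wedge 0$ and $\vee 0$ commutes with multiplication by a positive constant.

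These facts give $\mathbb{G}_{a,b}(x) = c\,\mathbb{G}^{*}_{1,1}(x/k)$, and likewise for $\mathbb{G}^{\pm}$. Differentiating on the right,
\begin{align*}
g_{a,b}(x) = \frac{c}{k}\, g^{*}_{1,1}(x/k), \qquad g^{0}_{a,b}(x) = \frac{c}{k}\, g^{0,*}_{1,1}(x/k),
\end{align*}
with $c/k = a(a/b)^{1/3}(b/a)^{2/3} = a(b/a)^{1/3}$ and $1/k = (b/a)^{2/3}$, which are exactly the constants in the statement. The two rescaled functions differ at $x$ if and only if the canonical ones differ at $x/k$. Hence $E_{a,b} = k E^{*}_{1,1} = (a/b)^{2/3}E^{*}_{1,1}$ pathwise. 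The factor $a(b/a)^{1/3}$ that the statement places in front of $E_{1,1}$ should be read as applying only to the two functions; the set scales by $(a/b)^{2/3}$ alone. Since $\mathbb{B}^{*} \stackrel{d}{=} \mathbb{B}$, the triple identity in law follows, as a statement about the image of the path law under a fixed measurable map into $\mathcal{L}_2 \times \mathcal{L}_2$ together with the set.

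For the second claim, I substitute $x = ky$ in the integral and use the pathwise identities:
\begin{align*}
\int \bigl\{(g_{a,b}(x))^{2} - (g^{0}_{a,b}(x))^{2}\bigr\}\,dx = \Bigl(\frac{c}{k}\Bigr)^{2} k \int \bigl\{(g^{*}_{1,1}(y))^{2} - (g^{0,*}_{1,1}(y))^{2}\bigr\}\,dy.
\end{align*}
The prefactor is $c^{2}/k = a^{2}k = a^{2}(a/b)^{2/3}\cdot(b/a)^{2/3} = a^{2}$. Hence $a^{-2}\int\{\cdots\}\,dx$ equals the canonical integral pathwise, and that integral has the law of $\mathbb{D}$.

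The main point needing care is that these integrals are well defined and finite, so that the change of variables is legitimate. I would handle this with Proposition~\ref{p1}. The integrand vanishes off $E_{a,b}$, which lies in a bounded interval $[-L_\epsilon, L_\epsilon]$ with probability at least $1-\epsilon$; letting $\epsilon \downarrow 0$ gives boundedness almost surely. On that bounded set, both $g$ and $g^{0}$ are monotone and hence bounded. So the integral is an a.s. finite, measurable functional of the path, and the equality in law transfers to it.
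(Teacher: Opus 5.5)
Your proof is correct and is the Brownian-scaling argument the paper invokes: the paper gives no details and simply cites the standard literature. You also read correctly that the set $E_{a,b}$ scales by $(a/b)^{2/3}$ alone, not by the function factor as well; one cosmetic slip is that, since $c^{2}=a^{2}k$, the prefactor is $c^{2}/k=a^{2}$ directly, and the intermediate equality $c^{2}/k=a^{2}k$ you wrote is wrong even though your final value is right.
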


Further denote by $E_{N}$ the set on which $\hat{h}_{N}$ and $\hat{h}_{N}^{(0)}$ differ and let $\widetilde{E}_{N}$ to be the set $N^{1 / 3}\left(E_{N}-r_{0}\right)$. We then have the following result similar to Proposition~\ref{p1}.

\begin{lemma}\label{lem1}
    For any $\epsilon > 0$, there exists $L_\epsilon > 0$ such that with probability at least $1-\epsilon$, $\widetilde{E}_{N} \subseteq [-L_\epsilon,L_\epsilon]$ for all sufficiently large $N$.
\end{lemma}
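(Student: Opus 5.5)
The plan is to follow the argument of \cite{banerjee2001likelihood} (Lemma 2.5 there), adapted to the present empirical processes $\mathbb{F}_N,\mathbb{G}_N$. The first step is structural. The unconstrained estimator $\hat{h}_N$ and the constrained estimator $\hat{h}^{(0)}_N$ are slopes of greatest convex minorants of the same cumulative sum diagram, and the constrained one is obtained by splitting the diagram at $r_0$ and truncating by $h_0$. So if the two diagrams share a touch point (a point where the minorant equals the cusum diagram) to the left of $r_0$, say at $\tau^-$, the slopes to the left of $\tau^-$ coincide, except possibly where the constrained slope is truncated at $h_0$. The same holds for a common touch point $\tau^+$ to the right of $r_0$. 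Concretely, I would show that $E_N \subseteq [D_N^-, D_N^+]$, where:
\begin{itemize}
\item $D_N^-$ is the last touch point of the unconstrained minorant before the last touch point of the left constrained minorant that lies at or before the place where the left slope first reaches $h_0$;
\item $D_N^+$ is defined symmetrically on the right.
\end{itemize}
In fact it suffices to show that the following quantities are $O_p(N^{-1/3})$ distance from $r_0$:
\begin{itemize}
\item the point where $\hat h_N$ crosses the level $h_0 = h(r_0)$;
\item the analogous crossing points of the one-sided minorants;
\item the neighbouring knots of all three minorants.
\end{itemize}

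The second step turns the crossing points into argmins. Via the switching relations already established, namely \eqref{eq:rN4} and \eqref{eq:rN5}, the crossing points equal $r_0 + N^{-1/3}\widetilde{\mathcal{R}}_N(0)$ and $r_0+N^{-1/3}\widetilde{\mathcal{R}}^{\pm}_N(0)$. In the proof of Theorem~\ref{thm1} it was shown, via Theorems 2.14.2 and 3.2.5 of \cite{MR1385671}, that these argmins are $O_p(1)$. The same modulus-of-continuity bound applies to the one-sided versions, since restricting the minimization domain does not affect the rate argument.

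The third step handles the knots. Take an arbitrary slope level $A$ in a fixed compact set, for example $|A|\le M$. Then $\widetilde{\mathcal{R}}_N(A)$, $\widetilde{\mathcal{R}}^{\pm}_N(A)$ and the corresponding infima are $O_p(1)$, uniformly in such $A$. This follows from the monotonicity of $A\mapsto \widetilde{\mathcal{R}}_N(A)$, so it is enough to control $A=\pm M$. The key point is that the knots of the minorants immediately outside the crossing region are themselves argmins at slightly different levels. Concretely, the nearest knot to the left of $r_0$ at which the unconstrained and constrained diagrams touch can be bracketed by $\widetilde{\mathcal{R}}_N(-M)$ and $\widetilde{\mathcal{R}}^{-}_N(-M)$. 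This gives, for every $\epsilon$, an $L_\epsilon$ with $\mathbb{P}(\widetilde E_N\subseteq[-L_\epsilon,L_\epsilon])\ge 1-\epsilon$ for large $N$.

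The main obstacle is the bracketing claim in the third step. One must show that, at a level $-M$ well below $h_0$ in the local scale, the unconstrained and left-constrained minorants actually share a touch point at or beyond the argmin. Equivalently, one must show that with high probability the greatest convex minorants of the full diagram and of the diagram restricted to $[0,r_0]$ agree to the left of some $O_p(N^{-1/3})$-close point. For this I would use the weak convergence \eqref{eq:c1} together with Proposition~\ref{p1}. In the limit, the event that $g_{\alpha,\beta}$ and $g^{(0)}_{\alpha,\beta}$ differ outside $[-L,L]$ has probability at most $\epsilon$. By the continuous mapping theorem, applied on compacts to the touch-point functionals (which are a.s.\ continuous at Brownian-plus-drift paths), together with the $O_p(1)$ tightness of the argmins at level $\pm M$ (which localizes everything to a compact), this transfers to the finite-$N$ statement for all sufficiently large $N$.
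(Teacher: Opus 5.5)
Your Steps 1--3 assemble the right objects, but the step you flag as the main obstacle is resolved by an argument that cannot work. What must be shown is \emph{exact} agreement of $\hat h_N$ and $\hat h^{(0)}_N$ off $r_0+N^{-1/3}[-L,L]$. Equivalently, the unconstrained and one-sided minorants must touch the cusum diagram at the \emph{same} data point on each side. You propose to get this by transferring Proposition~\ref{p1} from the limit via continuous mapping. But the event ``$\Omega_N=\Omega^{(0)}_N$ off $[-L,L]$'' is closed with empty interior: every neighbourhood of a pair of functions contains pairs that disagree somewhere off $[-L,L]$. So the portmanteau theorem gives only a $\limsup_N$ upper bound on its probability, never the lower bound $1-\epsilon$ the lemma asserts. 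Put differently, convergence of touch-point functionals can at best place the touch points of the two finite-$N$ minorants \emph{near} the common touch point of the limit. It cannot make them coincide, and nothing short of coincidence gives $\hat h_N=\hat h^{(0)}_N$. There is also a smaller omission. Tightness of $\widetilde{\mathcal R}_N(\pm M)$ bounds $|\widetilde{\mathcal R}_N(\pm M)|$ but does not give the signs $\widetilde{\mathcal R}_N(-M)<0<\widetilde{\mathcal R}_N(M)$, which your bracketing tacitly uses.

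The missing idea is that the coincidence is deterministic.
\begin{itemize}
\item On the abscissa range of the left diagram, the full minorant is a convex minorant of the left diagram. Hence (full minorant) $\le$ (left minorant) $\le$ (diagram) there, and every touch point of the full minorant at or left of $R_{(s_0)}$ is automatically a touch point of the left minorant.
\item The point $\tau^-:=\mathcal R_N(h_0-N^{-1/3}M)$ minimizes $\mathbb G_N-a\mathbb F_N$ with $a=h_0-N^{-1/3}M$. So it is a touch point of the full minorant with a supporting line of slope $a$.
\item If $\tau^-\le r_0$, your concatenation property shows the two minorants coincide up to $\tau^-$. All slopes there are $\le a<h_0$, so the truncation $h_0\wedge\cdot$ is inactive and $E_N\cap[0,r_0]\subseteq(\tau^-,r_0]$.
\item Symmetrically, $E_N\cap(r_0,1]\subseteq(r_0,\tau^+]$ with $\tau^+:=\mathcal R_N(h_0+N^{-1/3}M)$, whenever $\tau^+>r_0$.
\end{itemize}
Thus $\widetilde E_N\subseteq[\widetilde{\mathcal R}_N(-M),\widetilde{\mathcal R}_N(M)]$ on the event that these have the right signs. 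The crossing points, $D_N^\pm$ and the one-sided argmins are then superfluous.

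The only probabilistic input needed is the argmin continuous mapping theorem at two fixed levels. By \eqref{eq:c1}, $\widetilde{\mathcal R}_N(\pm M)\stackrel{d}{\to}\arg\min_\delta\{\widetilde{\mathbb B}_{\alpha,\beta}(\delta)\mp M\delta\}\stackrel{d}{=}\pm M/(2\beta)+\arg\min_\delta\widetilde{\mathbb B}_{\alpha,\beta}(\delta)$. The events $\{-L<\widetilde{\mathcal R}_N(-M)<0\}$ and $\{0<\widetilde{\mathcal R}_N(M)<L\}$ are open, so portmanteau now works in the right direction.

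Thus repaired, your route differs from the paper's. The paper encodes the same touch-point fact in the inclusion $\{Y_N>M\}\subseteq\{\Omega_N(M)=\Omega_N(0)\}\cup\{\Omega^{(0)}_N(M)=0\}$ and bounds these \emph{closed} events from above via Theorem~\ref{thm2}. Your version needs only argmin limits at fixed levels, not the joint process convergence.
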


\begin{proof}
By definition, $\widetilde{E}_{N}$ is either the null set, or an interval $[X_N, Y_N]$ containing the origin, and in the first case the proof is immediate. For the latter case, fix $\epsilon > 0$. 
For any $M > 0$, $Y_N > M$ implies that $r_0 + N^{-1/3}M \in E_N$. Note that $r \in E_N$ implies either i) $\hat{h}_N(r) = \hat{h}_N(r_0)$ or ii) either $\hat{h}^{(0)}_N(r) = h_0 < \hat{h}_N(r) < \hat{h}_N(r_0)$ or $\hat{h}_N(r_0) < \hat{h}_N(r) < h_0 = \hat{h}^{(0)}_N(r)$. So,
\begin{equation*}
\{Y_N > M\} \subset  \{\Omega_N(M) = \Omega_N(0)\} \cup  \{\Omega^{(0)}_N(M) = 0\}.
\end{equation*}
By Theorem \ref{thm2}, for all sufficiently large $M$, we have
\begin{align*}&\mathbb{P}(\Omega_N(M) = \Omega_N(0)) \to \mathbb{P}(g_{\alpha,\beta}(M) = g_{\alpha,\beta}(0)) < \epsilon/4, \\
&\mathbb{P}(\Omega^{(0)}_N(M) = 0) \to \mathbb{P}(g^{(0)}_{\alpha,\beta}(M) = 0) = \mathbb{P}(g^+_{\alpha,\beta}(M) \leq 0)  < \epsilon/4.\end{align*}
Thus there exists $L_\epsilon > 0$ such that $\mathbb{P}(Y_N > L_{\epsilon}) < \epsilon/2$ for all sufficiently large $N$. In a similar way, we can show (by choosing $L_\epsilon$ so that it works in both cases) $\mathbb{P}(X_N < -L_{\epsilon}) < \epsilon/2$ for all sufficiently large $N$. Thus, $\mathbb{P}(\widetilde{E}_{N} \subseteq [-L_\epsilon,L_\epsilon]) > 1-\epsilon$ for all sufficiently large $N$.
\end{proof}

\begin{proof}[Proof of Theorem~\ref{thm:lrt}]
Let $\theta(x) = \log \left(1+e^x\right)$, $\psi(r) = \log \left(\frac{h(r)}{1-h(r)}\right)$, $\psi_0 = \psi(r_0)$,
\[\hat{\psi}_{N}(r) = \log\left(\frac{\hat{h}_N(r)}{1-\hat{h}_N(r)}\right),  \quad \hat{\psi}^{(0)}_{N}(r) = \log\left(\frac{\hat{h}^{(0)}_N(r)}{1-\hat{h}^{(0)}_N(r)}\right),\] and $\hat{\psi}_{s} = \hat{\psi}_{N}\left(R_{(s)}\right)$, $\hat{\psi}^{(0)}_{s} = \hat{\psi}_{N}^{(0)}\left(R_{(s)}\right)$. Applying Delta method on \eqref{eq:fdd}, 
we get
\begin{align}\label{eqd}
    N^{1/3}&\left(\hat{\psi}_N\left(r_0 + N^{-1/3}x\right) - \psi(r_0), \hat{\psi}^{(0)}_N\left(r_0 + N^{-1/3}x\right) - \psi(r_0)\right) \stackrel{\operatorname{d}}{\to} \left(g_{\eta,\xi}, g^{(0)}_{\eta,\xi}\right),\\
\eta^{2} &= \left(\frac{1}{h\left(r_{0}\right)\left(1-h\left(r_{0}\right)\right)}\right)^2\alpha^2 = \frac{1}{h\left(r_{0}\right)\left(1-h\left(r_{0}\right)\right)\mathcal{F}^\prime\left(r_{0}\right)}, \nonumber \\ \xi &= \left(\frac{1}{h\left(r_{0}\right)\left(1-h\left(r_{0}\right)\right)}\right)\beta = \frac{h'(r_0)}{2h\left(r_{0}\right)\left(1-h\left(r_{0}\right)\right)}. \nonumber\end{align} 
Note that Delta method on \eqref{eq:fdd} yields only the convergence of the finite dimensional distributions. As the processes $\hat{\psi}_N$, $ \hat{\psi}^{(0)}_N$ are monotone, Corollary 2 of \cite{MR1311975} implies
the convergence in \eqref{eqd} is 
in the space $\mathcal{L}_2 \times \mathcal{L}_2$ also. The likelihood ratio statistic is then given by
$$
\begin{aligned}
2 \log \lambda_{N}(h_0) 
& =2 \sum_{s=1}^{S} {F}_{s} \overline{U}_{s}\left(\hat{\psi}_{s}-\hat{\psi}^{(0)}_{s}\right)-2 \sum_{s=1}^{S} {F}_{s}\left(\theta\left(\hat{\psi}_{s}\right)-\theta\left(\hat{\psi}_{s}^{(0)}\right)\right).
\end{aligned}
$$
By denoting the set of indices $s \in \{1, \ldots, S\}$ by $I_N$ for which $\hat{\psi}_{s}$ differs from $\hat{\psi}_{s}^{(0)}$ and using Taylor expansion of $\theta\left(\hat{\psi}_{s}\right)$ and $\theta\left(\hat{\psi}_{s}^{(0)}\right)$ around $\psi_0
$, we obtain
\begin{align}\label{eq:ll1}
2 \log \lambda_{N}(h_0) &= 2 \sum_{s=1}^{S} {F}_{s} \overline{U}_{s}\left(\hat{\psi}_{s}-\hat{\psi}^{(0)}_{s}\right) - 2 \theta'\left(\psi_0\right)\sum_{s \in I_N} {F}_s\left(\left(\hat{\psi}_{s}-\psi_0\right)-\left(\hat{\psi}_{s}^{(0)}-\psi_0\right)\right) \nonumber\\ &\qquad -\theta''\left(\psi_0\right)\sum_{s \in I_N} {F}_s\left(\left(\hat{\psi}_{s}-\psi_0\right)^2-\left(\hat{\psi}_{s}^{(0)}-\psi_0\right)^2\right) +  A^{(1)}_N\nonumber\\
&= 2 \sum_{s \in I_N} {F}_s\left(\hat{\psi}_{s}-{\psi}_{0}\right)\left(\overline{U}_{s} - \theta'\left(\psi_0\right)\right) - 2 \sum_{s \in I_N} {F}_s\left(\hat{\psi}_{s}^{(0)}-{\psi}_{0}\right)\left(\overline{U}_{s} - \theta'\left(\psi_0\right)\right) \nonumber\\ & \qquad -\theta''\left(\psi_0\right)\sum_{s \in I_N} {F}_s\left(\left(\hat{\psi}_{s}-\psi_0\right)^2-\left(\hat{\psi}_{s}^{(0)}-\psi_0\right)^2\right) + A^{(1)}_N,
\end{align}
where for some $\psi^{(1)}_{N,s}$ (lying between $\hat{\psi}_{s}$ and $\psi_0$) and $\psi^{(2)}_{N,s}$ (lying between $\hat{\psi}_{s}^{(0)}$ and $\psi_0$),
\[
   A^{(1)}_N = -\frac{1}{3}\sum_{s \in I_N} {F}_s\left(\theta'''\left(\psi^{(1)}_{N,s}\right)\left(\hat{\psi}_{s}-\psi_0\right)^3-\theta'''\left(\psi^{(2)}_{N,s}\right)\left(\hat{\psi}_{s}^{(0)}-\psi_0\right)^3\right).
\] 
Divide $I_{N}$ into consecutive sets of indices $S_{1}, S_{2}, \ldots, S_{k}$ such that, 
$\hat{\psi}_{i}$ is constant ( = $c_j$, say) for all $i \in S_{j}$, $1 \leq j \leq k$. On $S_{j}$, from \eqref{eq:hath}, it holds that $\theta'\left(c_{j}\right)
= \frac{\sum_{i \in S_{j}} {F}_{i} \overline{U}_{i}}{\sum_{i \in S_{j}} {F}_{i}}$, and so
\begin{align}\label{eq:ll2}
&\sum_{s \in I_N} {F}_s\left(\hat{\psi}_{s}-{\psi}_{0}\right)\left(\overline{U}_{s} - \theta'\left(\psi_0\right)\right) \nonumber
\\& = \sum_{j = 1}^k\left(c_{j}-\psi_0\right)\left(\sum_{i \in S_{j}} {{F}_{i} \overline{U}_{i}}-\theta'\left(\psi_0\right) \sum_{i \in S_{j}} {{F}_{i}}\right) \nonumber
\\ & = \sum_{j = 1}^k\left(c_{j}-\psi_0\right)\left(\theta'\left(c_j\right) - \theta'\left(\psi_{0}\right)\right) \sum_{i \in S_{j}} {{F}_{i}} \nonumber
\\ &=\sum_{s \in I_N} {F}_s\left(\hat{\psi}_{s}-{\psi}_{0}\right)\left(\theta'\left(\hat{\psi}_{s}\right) - \theta'\left(\psi_{0}\right)\right) \nonumber
\\ &=\theta''\left(\psi_0\right)\sum_{s \in I_N} {F}_s\left(\hat{\psi}_{s}-{\psi}_{0}\right)^2 + A^{(2)}_N,
\end{align}
where for some $\psi^{(3)}_{N,s}$ (lying between $\hat{\psi}_{s}$ and $\psi_0$),
\[
   A^{(2)}_N = \frac{1}{2}\sum_{s \in I_N} {F}_s\theta'''\left(\psi^{(3)}_{N,s}\right)\left(\hat{\psi}_{s}-\psi_0\right)^3.
\]
Similarly, we have
\begin{align}\label{eq:ll3}
\sum_{s \in I_N} {F}_s\left(\hat{\psi}^{(0)}_{s}-{\psi}_{0}\right)\left(\overline{U}_{s} - \theta'\left(\psi_0\right)\right)
=\theta''\left(\psi_0\right)\sum_{s \in I_N} {F}_s\left(\hat{\psi}^{(0)}_{s}-{\psi}_{0}\right)^2 + A^{(3)}_N,
\end{align}
where for some $\psi^{(4)}_{N,s}$ (lying between $\hat{\psi}^{(0)}_{s}$ and $\psi_0$),
\[
   A^{(3)}_N = \frac{1}{2}\sum_{s \in I_N} {F}_s\theta'''\left(\psi^{(4)}_{N,s}\right)\left(\hat{\psi}^{(0)}_{s}-\psi_0\right)^3.
\]
Combining \eqref{eq:ll1}, \eqref{eq:ll2} and \eqref{eq:ll3}, we get
\begin{align}\label{eq:ll4}
    2 \log \lambda_{N}(h_0) &= 
\theta''\left(\psi_0\right)\sum_{s \in I_N} {F}_s\left(\left(\hat{\psi}_{s}-\psi_0\right)^2-\left(\hat{\psi}_{s}^{(0)}-\psi_0\right)^2\right) + A^{(1)}_N +2(A^{(2)}_N - A^{(3)}_N).
\end{align}
Denote by $E_{N}$ the set on which $\hat{\psi}_{N}$ and $\hat{\psi}_{N}^{(0)}$ differ. We then have 
\begin{align}\label{eq:ll5}
& \theta''\left(\psi_0\right)\sum_{s \in I_N} {F}_s\left(\left(\hat{\psi}_{s}-\psi_0\right)^2-\left(\hat{\psi}_{s}^{(0)}-\psi_0\right)^2\right) = N^{1/3}(\mathbb{P}_N-P)\gamma_{N} + N^{1/3}P\gamma_{N},
\quad \text{ where } \nonumber
    \\[.15cm] &\gamma_{N}(T, (R_t)_{t=1}^{T}) = \theta''\left(\psi_0\right)\sum_{t=1}^{T_0}\left\{\left(N^{1/3}\left(\hat{\psi}_{N}(R_t)-\psi_0\right)\right)^{2} \right.\\ &\hspace{5.15cm}\left. -\left(N^{1/3}\left(\hat{\psi}_{N}^{(0)}(R_t)-\psi_0\right)\right)^{2}\right\}\mathbf{1}\left\{R_t \in E_{N}\right\}\mathbf{1}\left\{t \leq T\right\} \nonumber
\end{align}
Note that, due to \eqref{eqd}, the monotone functions $N^{1/3}\left(\hat{\psi}_N(r) - \psi_0\right)$ and $N^{1/3}\left(\hat{\psi}_N^{(0)}(r) - \psi_0\right)$ are, with arbitrarily high probability, bounded on compact sets of the form $[r_0 - N^{-1/3}M, r_0 + N^{-1/3}M]$ for all sufficiently large $N$. From Lemma~\ref{lem1}, it follows that, 
with arbitrarily high probability, $E_N$ is contained in intervals of the form $[r_0 - N^{-1/3}M, r_0 + N^{-1/3}M]$ for all sufficiently large $N$. Theorem 2.7.5 of \cite{MR1385671} and permanence of the Donsker property (see, for example, Section 2.10 of \cite{MR1385671}) thus imply that, with arbitrarily high probability, $\gamma_N$ is in a universally Donsker class of functions for all sufficiently large $N$. Consequently, 
\begin{align}\label{eq:ll6}
N^{1/3}(\mathbb{P}_N-P)\gamma_{N} = N^{-1/6}\sqrt{N}(\mathbb{P}_N-P)\gamma_{N} \stackrel{p}{\to} 0.
\end{align}
Further, as $\theta'''$ is bounded, a similar argument gives,  with arbitrarily high probability, that for all sufficiently large $N$ and for some constant $C_1, L > 0$ and ,
\begin{align*}
    \left|A^{(2)}_N\right| \leq C_1 (\mathbb{P}_N-P)\zeta_{N} + CP\zeta_{N},
\end{align*}
where $\zeta_{N}(T, (R_t)_{t=1}^{T}) = \sum_{t=1}^{T_0}\mathbf{1}\left\{R_t \in[r_0 - N^{-1/3}L, r_0 + N^{-1/3}L]\right\}\mathbf{1}\left\{t \leq T\right\}$. A similar argument as in the case of $\gamma_N$ shows that $(\mathbb{P}_N-P)\zeta_{N} \stackrel{p}{\to} 0$. Also, for some constant $C_2 > 0$,
\begin{align*}
P\zeta_{N} &= \mathbb{E}\left[\sum_{t=1}^{T}\mathbf{1}\left\{r_0 - N^{-1/3}L \leq R_{t} \le r_0+N^{-1/3}L\right\}\right] \nonumber
\\ &= \sum_{t_0=1}^{T_0}\mathbb{P}(T=t_0)\sum_{t=1}^{t_0}\int_{r_0 - N^{-1/3}L}^{r_0+N^{-1/3}L}f_t(r)dr \leq C_2N^{-1/3} \to 0.
\end{align*}
So, $A^{(2)}_N \stackrel{p}{\to} 0$ and similarly, $A^{(1)}_N, A^{(3)}_N\stackrel{p}{\to} 0$. These along with \eqref{eq:ll4}, \eqref{eq:ll5} and \eqref{eq:ll6} imply
\begin{align}\label{eq:ll7}
    2 \log \lambda_{N}(h_0) - N^{1/3}P\gamma_{N} \stackrel{p}{\to} 0.
\end{align}
Next let $\widetilde{E}_{N}$ to be the set $N^{1 / 3}\left(E_{N}-r_{0}\right)$. Then
\begin{align}\label{eq:ll8}
    N^{1/3}P\gamma_{N} &= N^{1/3}\theta''\left(\psi_0\right)\mathbb{E}\left[\sum_{t=1}^{T}\left\{\left(N^{1/3}\left(\hat{\psi}_{N}(R_t)-\psi_0\right)\right)^{2} \right.\right. \nonumber \\ &\hspace{4cm}\left.\left. -\left(N^{1/3}\left(\hat{\psi}_{N}^{(0)}(R_t)-\psi_0\right)\right)^{2}\right\}\mathbf{1}\left\{R_t \in E_{N}\right\}\right] \nonumber \\
    &= \theta''\left(\psi_0\right)\sum_{t_0=1}^{T_0}\mathbb{P}(T=t_0)\sum_{t=1}^{t_0}\int_{\widetilde{E}_{N}}\left\{\left(N^{1/3}\left(\hat{\psi}_{N}(r_0 + N^{-1/3}\delta)-\psi_0\right)\right)^{2} \right. \\ &\hspace{4.25cm}\left. -\left(N^{1/3}\left(\hat{\psi}_{N}^{(0)}(r_0 + N^{-1/3}\delta)-\psi_0\right)\right)^{2}\right\}f_t(r_0 + N^{-1/3}\delta)d\delta. \nonumber 
\end{align}
Recall that $E_{\eta,\xi}$ is the set on which $g_{\eta,\xi}$ and $g^{(0)}_{\eta,\xi}$ differ. From lemma~\ref{lem1} and \eqref{eqd}), we conclude that for each $\epsilon > 0$, there exists $L_\epsilon > 0$ such that with probability at least $1-\epsilon$, $\widetilde{E}_{N} \in [-L_\epsilon, L_\epsilon]$ for all sufficiently large $N$ and $E_{\eta,\xi} \in [-L_\epsilon, L_\epsilon]$ such that on $[-L_\epsilon, L_\epsilon]$ both $\left(N^{1/3}\left(\hat{\psi}_{N}^{(0)}(r_0 + N^{-1/3}\delta)-\psi_0\right)\right)$ and $\left(N^{1/3}\left(\hat{\psi}_{N}(r_0 + N^{-1/3}\delta)-\psi_0\right)\right)$ are uniformly bounded for all sufficiently large $N$. This, along with \eqref{eqd}, \eqref{eq:ll8} and continuity of $f_t$ imply that, for some $A^{(4)}_N \stackrel{p}{\to} 0$,
\begin{align}\label{eq:ll9}
    N^{1/3}P\gamma_{N} &= \frac{1}{\eta^2}\int_{\widetilde{E}_{N}}\left\{\left(N^{1/3}\left(\hat{\psi}_{N}(r_0 + N^{-1/3}\delta)-\psi_0\right)\right)^{2} \right. \\ &\hspace{2.25cm}\left. -\left(N^{1/3}\left(\hat{\psi}_{N}^{(0)}(r_0 + N^{-1/3}\delta)-\psi_0\right)\right)^{2}\right\}d\delta + A^{(4)}_N. \nonumber 
\end{align}
Observe that
\begin{align}\label{eq:ll10}
    \lim_{\epsilon \to 0} \limsup_{N\to\infty} &\hspace{.1cm}\mathbb{P}\left(\frac{1}{\eta^2}\int_{[-L_\epsilon, L_\epsilon]\setminus\widetilde{E}_{N}}\left\{\left(N^{1/3}\left(\hat{\psi}_{N}(r_0 + N^{-1/3}\delta)-\psi_0\right)\right)^{2} \right.\right. \nonumber \\ &\qquad\qquad\left.\left. -\left(N^{1/3}\left(\hat{\psi}_{N}^{(0)}(r_0 + N^{-1/3}\delta)-\psi_0\right)\right)^{2}\right\}d\delta \neq 0\right) = 0.
\end{align}
\begin{align}\label{eq:ll12}
    \lim_{\epsilon \to 0} \mathbb{P}\left(\frac{1}{\eta^2}\int_{[-L_\epsilon, L_\epsilon]\setminus{E}_{\eta,\xi}}\left\{\left(g_{\eta, b}(\delta)\right)^{2}-\left(g_{\eta, b}^{0}(\delta)\right)^{2}\right\} d \delta \neq 0\right) = 0.
\end{align}
Also \eqref{eqd}, the Continuous Mapping Theorem and Theorem~\ref{lem:dist} imply that for each $\epsilon > 0$,
\begin{align}\label{eq:ll13}
    &\frac{1}{\eta^2}\int_{[-L_\epsilon, L_\epsilon]}\left\{\left(N^{1/3}\left(\hat{\psi}_{N}(r_0 + N^{-1/3}\delta)-\psi_0\right)\right)^{2} \right.\nonumber \\ &\qquad\qquad\qquad\left. -\left(N^{1/3}\left(\hat{\psi}_{N}^{(0)}(r_0 + N^{-1/3}\delta)-\psi_0\right)\right)^{2}\right\}d\delta \nonumber \\ &\stackrel{d}{\to} \frac{1}{\eta^2}\int_{[-L_\epsilon, L_\epsilon]}\left\{\left(g_{\eta, b}(\delta)\right)^{2}-\left(g_{\eta, b}^{0}(\delta)\right)^{2}\right\} d \delta \stackrel{d}{=} \mathbb{D}.
\end{align}
Then, \eqref{eq:ll7}, \eqref{eq:ll9} - \eqref{eq:ll13} and Lemma 4.2 of \cite{MR267677} imply $2 \log \lambda_{N}(h_0) \stackrel{d}{\to} \mathbb{D}$.
\end{proof}

\phantomsection\label{supplementary-material}
\bigskip

\begin{center}
{\large\bf SUPPLEMENTARY MATERIAL}
\end{center}

The R-code for the simulation experiments and real data analysis, described in Section~\ref{sec4} and Section~\ref{sec5} respectively, is available at the following URL: \\ \href{https://github.com/tamojitsadhukhan/Learning-Preference-from-the-Past}{https://github.com/tamojitsadhukhan/Learning-Preference-from-the-Past}.

\end{document}